\def\1{{\sf 1}}
\def\A{{\mathcal{A}}}
\def\D{{\mathcal{D}}}
\def\L{{\mathcal{L}}}
\def\J{{\mathcal{J}}}
\def\P{{\mathcal{P}}}
\def\Q{{\mathbf{Q}}}
\def\<{\langle}
\def\>{\rangle}
\def\={{\!\!\!=\!\!\!}}
\def\e{{\mathbf{e}}}
\newcommand{\field}[1]{\ensuremath{\mathbb{#1}}}
\newcommand{\R}{\field{R}}
\newcommand{\norm}[1]{\ensuremath{\left|\!\left|#1\right|\!\right|}}
\newcommand{\levy}{L\'{e}vy~}
\newcommand{\levyito}{L\'{e}vy-It\^{o}}
\newcommand{\garding}{G{\aa}rding~}
\newcommand{\dsp}{\displaystyle}
\newcommand{\be}{\begin{equation}}
\newcommand{\ee}{\end{equation}}
\newcommand{\bea}{\begin{eqnarray}}
\newcommand{\eea}{\end{eqnarray}}
\newcommand{\eee}{\end{eqnarray*}}
\newcommand{\bee}{\begin{eqnarray*}}
\newtheorem{remark}[theorem]{Remark}
\title{Pricing Two-asset Options under Exponential L\'{e}vy Model Using a Finite Element Method}
\author{Xun Li\thanks{Department of Applied Mathematics, Hong Kong Polytechnic University, Hung Hom, Kowloon, Hong Kong (malixun@polyu.edu.hk). This author acknowledges financial support from General Research Fund of Hong Kong SAR no. 15209614 and 15224215.} \and Ping Lin\thanks{Department of Mathematics, University of Dundee, Dundee DD1 4HN, Scotland, UK (plin@maths.dundee.ac.uk). This author acknowledges financial support from NSF of China No 91430106, Fundamental Research Funds for the Central Universities Nos. 06108037 and FRF-BR-13-023.}  \and Xue-Cheng Tai $^\S$\thanks{Department of Mathematics, University of Bergen, Johannes Brunsgate 12, Bergen 5008, Norway (tai@cma.uio.no).} \and Jinghui Zhou\thanks{Division of Mathematical Sciences, School of Physical \& Mathematical Sciences, Nanyang Technological University, 21 Nanyang Link, Singapore 637371 (xctai@ntu.edu.sg, jhzhou@ntu.edu.sg). The research has been supported by SUG 20/07  of Nanyang Technological University.  In addition, support from MOE (Ministry of Education) Tier II project T207N2202 and IDM project NRF2007IDM-IDM002-010 is also gratefully acknowledged.}}
\begin{document}

\maketitle

\begin{abstract}
This article presents a finite element method (FEM) for a partial integro-differential equation (PIDE) to price two-asset options with underlying price processes modeled by an exponential L\'{e}vy process. We provide a variational formulation in a weighted Sobolev space, and establish existence and uniqueness of the FEM-based solution. Then we discuss the localization of the infinite domain problem to a finite domain and analyze its error. We tackle the localized problem by an explicit-implicit time-discretization of the PIDE, where the space-discretization is done through a standard continuous finite element method. Error estimates are given for the fully discretized localized problem where two assets are assumed to have uncorrelated jumps. Numerical experiments for the polynomial option and a few other two-asset options shed light on good performance of our proposed method. 
\end{abstract}

\begin{keywords}
\levy process, partial integro-differential equation, finite element method, exponential \levy model
\end{keywords}


\pagestyle{myheadings}
\thispagestyle{plain}
\markboth{X. LI, P. LIN, X.-C. TAI and J.H. ZHOU}{PRICING TWO-ASSET OPTIONS UNDER FEM}

\section{Introduction}
\noindent
It is well documented that there are disadvantages for diffusion option pricing models (cf. \cite{BlackScholes:1973,Merton:1973,Dupire:1994,Derman:1994,HullWhite:1987,Heston:1993,Hagan:2002}) to capture the risk when abnormal market movements exist. Recent empirical studies and high-impact market crash show that the property of dramatic fluctuation in asset dynamics should be incorporated into diffusion models (cf. \cite{Belomestny:2012}). On the other hand, the inclusion of jumps into asset price modeling has been developed for many years. Merton (cf. \cite{Merton:1976}) originally introduced Poisson jump process into diffusion models. The Poisson jump diffusion model is an example of exponential \levy models (ELM) where the underlying price dynamics is represented as an exponential  of a \levy process (cf. \cite{David:2004,Sato:2002,FL:2008,Toivanen:2008,KL:2009,JP:2010,GM:2012}).  The extension of diffusion models to exponential \levy models allows to calibrate the models to the market price of options and to reproduce various implied volatility skews/smiles.

Conventionally the valuation of an option under a diffusion model (or Black-Scholes-Merton framework) requires to solve a parabolic partial differential equation. Detailed treatments could be found in \cite{Wilmott:1993, Wilmott:1998}. Assuming that systematic risk can be diversified under exponential \levy models, we may express the option price in terms of the solution of a parabolic integro-differential equation (PIDE), which includes a second order differential operator and a nonlocal integral operator. Option pricing with exponential \levy models has been studied in recent literature such as \cite{Geman:2002,Schoutens:2003,Rama:2003}.

A variety of finite difference methods to solve the one dimensional PIDE have been proposed  recently (cf. \cite{FujiwaraKunita:1985, Admin:1993,Zhang:1997,Briani:2004,Amadori:2000,AndersenAndreasen:2000,Forsyth:2004,Forsyth:2005,RamaCont:2005,Forsyth:2008,Pang:2008}). As an equivalence, binomial lattice methods  were adopted in Admin (cf. \cite{Admin:1993}) for one dimensional jump-diffusion models with a finite jump intensity. Andersen and Andreasen (cf. \cite{AndersenAndreasen:2000}) proposed an operator splitting method where the local part is treated with an implicit step and the nonlocal part is coped with an explicit scheme. Zhang (cf. \cite{Zhang:1997}) developed a semi-implicit finite difference scheme for a jump-diffusion model for pricing the American option. Although finite difference method is relatively efficient for one dimensional pricing problem, finite element method could provide a more general approach for tackling pricing problems on several assets.

In addition to the finite difference method, recently the variational formulation has been introduced by Matache et. al. (cf. \cite{Schwab:2004,Schwab:2005a,Schwab:2005b}) to the one-dimensional PIDE. These papers provided a rigorous analysis of consistency, stability and convergence for a wavelet Galerkin finite element method.
Their analysis based on a weighted Sobolev space provides a good tool for the PIDE in an unbounded domain. Using the finite element method for option pricing models is extensively discussed in Topper (cf. \cite{Topper:2005}). In comparison to finite difference methods, it has  several advantages in coping with domain geometry, boundary conditions and solution smoothness. Pricing multi-asset options under exponential \levy models involves several techniques: smoothing initial and boundary conditions, coping with possibly irregular mapped domains, localizing an unbounded domain to a bounded domain, treating possible singularity associated with certain jumps, discretizing the equation in both temporal and spatial variables. There are a few papers discussing the pricing for two asset option within the framework of ELM. The two asset option with jumps was priced or modeled through a Markov chain approach (cf. \cite{Martzoukos:2003}). Forsyth et.al. (cf. \cite{Forsyth:2008}) use a finite difference scheme for the two dimensional PIDE, but a theoretical error analysis of the PIDE is absent
 in their studies.

In this paper, we present a finite element method for the exponential \levy two-asset option pricing or the two dimensional PIDE. We will consider both put and call options.
Some details of the exponential \levy model and pricing equation for the two-asset option are presented in section 2. In addition smoothed initial and boundary conditions will be constructed and the error between the smoothed problem and the original problem will be estimated in this section. Section 3 begins with a variational formulation of the PIDE in a weighted Sobolev space, and existence and uniqueness of its solution. Then we localize the infinite domain to a bounded domain. In the bounded domain an explicit-implicit time-discretization combined with a continuous finite element method is introduced for the PIDE and error analysis is also done where two assets are assumed to have uncorrelated jumps.
Numerical experiments are given in Section 4 for the polynomial option and other two-asset options.

\section{Exponential \levy model for two assets}
\noindent In exponential \levy models, the stochastic dynamics of risky assets $S(\cdot)=(S_1(\cdot),S_2(\cdot))$ with initial prices $(S_1^0,S_2^0)$ is represented as the exponential of a \levy process:
\[S_i(t)=S_i^0 e^{r t+X_i(t)},\quad i=1,2,\]
where $r$ is risk-free interest rate and $X(t)=\left( X_1(t),X_2(t) \right)$ is a two dimensional \levy process starting from $0$ under risk-neutral probability $\Q$. The details for \levy process could be found in \cite{David:2004,Rama:2003}. The absence of arbitrage imposes that the discounted prices $e^{-rt}S(t)$ is a martingale under such measure $\Q$.

In the context of \levyito~decomposition (cf. pp.119-135 in \cite{Sato:2002}), the risk neutral dynamics of $S_i(\cdot)$ is given by
\begin{eqnarray*}
S_i(t)&=&S_i^0+\int_{0}^t r S_i(u-)du + \int_{0}^t \sigma_i S_i(u-)dW_i(u)\\
&&+\int_{0}^{t}\int_{\R}(e^{y_i}-1)S_i(u-)J_{X_i}(du\cdot dy_i),
\end{eqnarray*}
where $\sigma_i$ is the diffusion volatility for $S_i(\cdot)$, $W_1(\cdot),W_2(\cdot)$ are standard Brownian motions correlated by $\rho$ and $J_{X_i}$ is the compensated measure describing the jumps of $X_i$. In the following, we will consider the pricing equation for two-asset option under exponential \levy model and the initial and boundary conditions for the equation.

\subsection{Integro-differential equation for two-asset option}
In the classic martingale pricing approach from the insights of \cite{BlackScholes:1973,Merton:1973,Merton:1976,Rama:2003}, the value of a European option is defined as a discounted conditional expectation of its terminal payoff under a risk-neutral probability $\Q$. Following the ideas in \cite{RamaCont:2005,Schwab:2004,Schwab:2005a} for one dimensional case and assuming the jump components are independent, we formulate the pricing problem of a European-style two-asset option at time $t$ with strike price $K$, maturity $T$ and payoff $H(\cdot,\cdot)$ as the following parabolic integro-differential equation
\begin{eqnarray}\label{eq:PIDE}
&&\left\{\begin{array}{l} \dsp \frac{\partial V}{\partial t} + rS_1\frac{\partial V}{\partial S_1} + rS_2\frac{\partial V}{\partial S_2} - rV \\ [3mm]
\dsp + \frac{1}{2} \sigma_1^2 S_1^2 \frac{\partial^2 V}{\partial S_1^2} + \rho \sigma_1 \sigma_2 S_1 S_2 \frac{\partial^2 V}{\partial S_1 \partial S_2}+\frac{1}{2} \sigma_2^2 S_2^2 \frac{\partial ^2 V}{\partial S_2^2}   \\[3mm]
\dsp+ \int_{\mathbb{R}}\left(V(t,S_1e^{y_1},S_2)-V(t,S_1,S_2)-(e^{y_1}-1)S_1\frac{\partial V}{\partial S_1}\right) \nu_1(dy_1) \\ [3mm]
\dsp+\int_{\mathbb{R}}\left(V(t,S_1,S_2e^{y_2})-V(t,S_1,S_2)-(e^{y_2}-1) S_2\frac{\partial V}{\partial S_2}\right) \nu_2(dy_2) = 0, \\ [5mm]
V(T,S_1,S_2) = H(S_1,S_2),
\end{array}\right.
\end{eqnarray}
where $\nu_i(dy_i)$ is the \levy measure to describe the activity of jump size $y_i$ for underlying asset $S_i$.
The regularity properties for $V(t,S_1,S_2)$ and the detailed derivation of the PIDE (\ref{eq:PIDE}) could be found in \cite{Zhou:2009}. The payoff function $H(\cdot,\cdot)$ of certain two-asset options  are given in Table \ref{tab:2DPayoffs}.
\begin{table}[htbp]
\centering
\begin{tabular}{ll}
\toprule
  \textbf{Option Type}   & \textbf{Payoff $H(S_1,S_2)$} \\ \midrule
  Basket Call            & $\max\left((w_1S_1+w_2S_2)-K,0\right)$   \\
  Basket Put             & $\max\left(K-(w_1S_1+w_2S_2),0\right)$  \\
  Worse of 2 Assets      & $\max\left(\min(S_1,S_2),0\right)$   \\
  Minimum of 2 Assets    & $\max\left(K-\min(S_1,S_2),0\right)$   \\
\bottomrule
\end{tabular}
\caption{Payoff functions for some two-asset options.}
\label{tab:2DPayoffs}
\end{table}
Before we solve the problem (\ref{eq:PIDE}) by using numerical method, we introduce variable transformations and specify \levy measure (See \cite{Merton:1973}) as follow:
\begin{eqnarray}\label{eq:var_transforamtion}
\left\{\begin{array}{l}
\tau=T-t, x_i=\ln S_i, \\ [2mm]
\nu_i(dy_i)=k_i(y_i)dy_i, \\ [2mm]
k_i(y_i)=\lambda_i \frac{1}{\sqrt{2\pi}\gamma_i}\exp\big(-\frac{(y_i - \nu_i)^2}{2\gamma_i^2}\big), \\ [2mm]
h(x_1,x_2)=H(e^{x_1},e^{x_2}), \\ [2mm]
u(\tau,x)=e^{-r(T-t)}V(t,e^{x_1},e^{x_2}),
\end{array}\right.
\end{eqnarray}
where $\lambda_i$ is the intensity of normal distributed jumps with mean $\nu_i$ and variance $\gamma_i^2$ (cf. \cite{Rama:2003,Zhou:2009}). We now reformulate (\ref{eq:PIDE}) into an operator version as follows
\begin{eqnarray}\label{PIDE:OperatorForm}
u_\tau = \D[u]+\J[u], \quad (\tau,x) \in [0,T]\times  \mathbb{R}^2
\end{eqnarray}
together with initial condition
\begin{eqnarray}\label{PIDE:OriginalInitialCondition}
u|_{\tau=0}=h(x),\quad x \in \mathbb{R}^2,
\end{eqnarray}
where
\begin{eqnarray*}
\D[u] &=& \nabla\cdot (\kappa \nabla u)+\nabla \cdot (\alpha u) = \sum_{i,j=1}^{2}\kappa_{i,j} \frac{\partial^2 u}{\partial x_i \partial x_j} +\sum_{i=1}^2\alpha_i \frac{\partial u}{\partial x_i}, \\
\J[u] &=& \quad \int_{\mathbb{R}}\left(u(\tau,x+y_1 \e_1) - u(\tau,x)-(e^{y_1}-1) \frac{\partial u}{\partial x_1} (\tau,x)\right) k_1(y_1)dy_1 \\[2mm]
&& + \int_{\mathbb{R}}\left(u(\tau,x+y_2 \e_2)-u(\tau,x)-(e^{y_2}-1) \frac{\partial u}{\partial x_2}(\tau,x)\right)k_2(y_2)dy_2,
\end{eqnarray*}
and $\e_1=(1,0)$, $\e_2=(0,1)$, $\kappa=(\kappa_{i,j})_{2\times2}=\frac{1}{2}\left(\begin{array}{cc}
  \sigma_1^2 & \rho \sigma_1 \sigma_2 \\
  \rho \sigma_1 \sigma_2 & \sigma_2^2 \\
\end{array}%
\right)$, $\alpha=(\alpha_1,\alpha_2)^t=(r-\frac{1}{2}\sigma_1^2,
r-\frac{1}{2}\sigma_2^2)^t$ and $\nabla u=(\frac{\partial u}{\partial x_1},\frac{\partial u}{\partial x_2})^t$.

\subsection{Initial and boundary conditions}
The initial and boundary conditions are desired in order to obtain the solution $u(\tau,x)$  for the pricing equation. The specific payoff structure of the option could provide such information for $u$ when $\tau=0$ or $x_1 \rightarrow \pm \infty$, $x_2 \rightarrow \pm \infty$. From the viewpoint of discounting principle, $u(\tau,x) $ may be set as a reasonable linear function of $e^{x_1}$ and $e^{x_2}$ as $x_1 \rightarrow \pm \infty$ or $x_2 \rightarrow \pm \infty$, where there do not exist jumps at the boundary. So we may consider the initial and boundary
conditions based on a function of the form
\[g(\tau, x) = c_1(\tau)e^{x_1} + c_2(\tau)e^{x_2} + c_3(\tau)>0\]
satisfying equation (\ref{PIDE:OperatorForm}).
Substituting it into (\ref{PIDE:OperatorForm}), we have
\[c_1'e^{x_1} + c_2'e^{x_2} + c_3' = rc_1e^{x_1} + rc_2e^{x_2}\]
for all $x$. Thus $c_1(\tau) = e^{r\tau}c_1(0)$, $c_2(\tau) = e^{r\tau}c_2(0),$ and $c_3$ is constant.

Now we just consider a put option as an example for writing the initial and boundary conditions. The conditions for a call option can be written accordingly. For a European basket put with payoff $\max(K-(w_1S_1+w_2S_2),0)$, we may take $c_1(0) = -w_1$, $c_2(0) = -w_2$ and $c_3(0)= K$.
Then we may write
\begin{eqnarray}\label{cond:boundary_cond_at_+infty}
g(\tau, x) = \left(K-w_1e^{x_1+r\tau} - w_2e^{x_2+r\tau}\right)^+
\end{eqnarray}
and the initial and boundary conditions for pricing equation  (\ref{PIDE:OperatorForm}) may be given as follows
\begin{eqnarray}\label{cond:initial&boundary}
\begin{array}{l}
h(x) = (K-w_1e^{x_1}-w_2e^{x_2})^+,\quad x \in \Omega,\\
g(\tau,x) = (K-w_1e^{x_1+r\tau}- w_2e^{x_2+r\tau})^+,\quad (\tau,x) \in [0,T]\times \partial \Omega,
\end{array}
\end{eqnarray}
where $z^+=\max(0,z)$, $\Omega = \mathbb{R}^2$ and $\partial \Omega=\{(x)\in \mathbb{R}^2|x_1\rightarrow \pm \infty$ or $x_2\rightarrow \pm \infty\}$. From the expression of initial and boundary conditions we have $g(0,x)=h(x)$.

Combined with the initial and boundary conditions (\ref{cond:initial&boundary}), the pricing PIDE (\ref{PIDE:OperatorForm}) for basket put option under exponential \levy model can be written as the following
\begin{eqnarray}\label{eq:PIDE4NonsmoothPayoff}
\left\{\begin{array}{l}
u_{\tau}=\D[u]+\J[u],\\[1mm]
u|_{\tau=0} = h, \quad x \in \Omega,\\[1mm]
u|_{\partial \Omega} = g, \quad \tau\in (0,T].
\end{array}\right.
\end{eqnarray}

The function $g(\tau,x)$ used to define the initial and boundary conditions will be differentiated once in time and twice in space in later sections. For compatibility we assume $h(x)=g(0,x)$. So we need to approximate $g(\tau,x)$ by a smoother function $\tilde{g}(\tau,x)$.
We will then replace the original PIDE by the following
\begin{eqnarray}\label{eq:PIDE4SmoothedPayoff}
\left\{\begin{array}{l}
\tilde{u}_{\tau}=\D[\tilde{u}]+\J[\tilde{u}],\\[1mm]
\tilde{u}|_{\tau=0} = \tilde{h}, \quad x \in \Omega,\\[1mm]
\tilde{u}|_{\partial \Omega}= \tilde{g}, \quad \tau \in [0,T],
\end{array}\right.
\end{eqnarray}
where $\tilde{h}(x)\triangleq \tilde{g}(0,x)$.

To construct a smoother function $\tilde{g}$, we define a curve $C$ and a banded domain $\Omega_{\delta}$ along $C$.
\begin{eqnarray*}
C & = & \left\{x \in \mathbb{R}^2 ~|~ Ke^{-r\tau}-e^{x_1}- e^{x_2} = 0\right \},\\
\Omega_{\delta}& = & \left \{ x \in \Omega ~|~
\mbox{dist} (x, C) < \delta,\; \mbox{~and } \delta > 0 \right \}.
\end{eqnarray*}

For any $x\in \Omega_{\delta}\setminus C$ we can find a point $x^o=(x_1^o,x_2^o) \in C$ such that the segment $\overline{xx^o}$ is normal to $C$ at $x^o$. That is, $x^o$ satisfies $e^{x_1^o+r\tau}+e^{x_2^o+r\tau}=K$ and $(x_1-x^o_1)e^{x^o_1} + (x_2-x^o_2)e^{x^o_2} =0$.
Now we can introduce a smoother function $\tilde{g}$ by redefining $g$ in $\Omega_{\delta}$ along the normal direction $\vec{n}=\left(\frac{
\exp(x_1^o)}{\sqrt{\exp(2x_1^o)+\exp(2x_2^o)}},\frac{
\exp(x_2^o)}{\sqrt{\exp(2x_1^o)+\exp(2x_2^o)}}\right)$ of $C$:
\begin{eqnarray}\label{eq:construction_of_g}
\tilde{g}(\tau,x)=\left \{
\begin{array}{ll}
 p(n), &\quad x \in \Omega_{\delta}  \\
 g(\tau,x),& \quad \mbox{Otherwise},
 \end{array}%
\right.
\end{eqnarray}
where $n=sgn(x_1-x_1^o) \cdot \sqrt{(x_1-x_1^o)^2+(x_2-x_2^o)^2}$ (noting that $n=\pm \delta$ corresponds to points
$x^o\pm \delta \vec{n}$ in original coordinates) and $p(n)$ is a polynomial to be defined as follows.
%
Again we will consider the put option as an example. The call option case can be done similarly. Following the idea for one dimensional case in \cite{Lin:2008}, the polynomial $p(n)$ is constructed along the normal direction $\vec{n}$ satisfying
\begin{eqnarray*}
&p(\delta)=0, p'(\delta)=0, p''(\delta)=0,&\\
&p(-\delta)=g|_{n=-\delta}, p'(-\delta)=\frac{\partial g}{\partial
\vec{n}}|_{n=-\delta}, p''(-\delta)=\frac{\partial^2 g}{\partial^2
\vec{n}}|_{n=-\delta}.&
\end{eqnarray*}
We can explicitly write it as
\[p(n)=(n-\delta)^3 \left(a +b (n+\delta)+c(n+\delta)^2\right ),\]
where
\begin{eqnarray*}
a & = & - \frac{1}{8\delta^3} g|_{n=-\delta}, \\
b & = & - \frac{1}{8\delta^3}\left( \frac{3}{2\delta}g|_{n=-\delta}-\frac{\partial g}{\partial \vec{n}}|_{n=-\delta}\right), \\
c & = & -\frac{1}{16\delta^3}\left( \frac{\partial^2 g}{\partial^2
\vec{n}}|_{n=-\delta}+\frac{3}{\delta}\frac{\partial g}{\partial
\vec{n}}|_{n=-\delta}+ \frac{3}{\delta^2} g|_{n=-\delta} \right).
\end{eqnarray*}
It is easy to verify that $p(n)$ is monotonically decreasing along the normal direction $\vec{n}$ of the curve $C$ and the first order derivatives of $\tilde{g}$ in $\tau$ and the second order derivative of $\tilde{g}$ in $x$ are continuous. Moreover, we have the following proposition about the error between $\tilde{g}$ and $g$.

\begin{proposition} \label{estimate_g} At any fixed $\tau \in [0,T]$,
\begin{eqnarray*}
\left|\tilde{g}-g\right|\leq M \delta^2, \forall x\in \Omega,
\end{eqnarray*}
where $M$ is a generic constant independent of $\delta$. Furthermore,
\[p(n)=\mbox{\textsl{O}}(\delta), p'(n) = \mbox{\textsl{O}}(1), p''(n)=\mbox{\textsl{O}}(\frac{1}{\delta}).\].
\end{proposition}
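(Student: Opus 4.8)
The plan is to estimate $|\tilde g - g|$ directly on the band $\Omega_\delta$ (outside $\Omega_\delta$ the two functions coincide, so there is nothing to prove there), by extracting the size of the three data $g|_{n=-\delta}$, $\partial g/\partial\vec n|_{n=-\delta}$, $\partial^2 g/\partial^2\vec n|_{n=-\delta}$ that enter the coefficients $a,b,c$. The key observation is that $g(\tau,x)=(K - w_1 e^{x_1+r\tau} - w_2 e^{x_2+r\tau})^+$ vanishes exactly on the curve $C$ (after absorbing the $e^{r\tau}$ factor, $C$ is the zero set of $K e^{-r\tau} - e^{x_1} - e^{x_2}$, and with the basket weights the relevant level curve is the boundary of the region where $g>0$). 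Hence at a point $x^o\in C$ we have $g=0$, and a point at signed normal distance $n=-\delta$ from $C$ lies at distance $\delta$ from the zero set of a smooth function; a first-order Taylor expansion of $g$ along $\vec n$ about $x^o$ then gives $g|_{n=-\delta} = O(\delta)$, while $\partial g/\partial\vec n|_{n=-\delta}=O(1)$ and $\partial^2 g/\partial^2\vec n|_{n=-\delta}=O(1)$, the implied constants depending on $K,w_i,r,T$ but not on $\delta$. These are exactly the last three asymptotic claims of the proposition except that they are here asserted at $n=-\delta$; I will then propagate them to general $n\in(-\delta,\delta)$ through the explicit formula for $p(n)$.

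The second step is to insert these orders of magnitude into the formulas for $a,b,c$. From $g|_{n=-\delta}=O(\delta)$ we get $a = -\tfrac{1}{8\delta^3}\,O(\delta) = O(\delta^{-2})$; from $b = -\tfrac{1}{8\delta^3}\big(\tfrac{3}{2\delta}O(\delta) - O(1)\big) = O(\delta^{-3})$; and from $c = -\tfrac{1}{16\delta^3}\big(O(1) + \tfrac{3}{\delta}O(1) + \tfrac{3}{\delta^2}O(\delta)\big) = O(\delta^{-4})$. Now on $\Omega_\delta$ we have $|n|\le\delta$, so $|n-\delta|\le 2\delta$, $|n+\delta|\le 2\delta$, and therefore
\[
|p(n)| = |n-\delta|^3\,\big|a + b(n+\delta) + c(n+\delta)^2\big| \le (2\delta)^3\big(O(\delta^{-2}) + O(\delta^{-3})\,\delta + O(\delta^{-4})\,\delta^2\big) = O(\delta).
\]
Since $\tilde g = p(n)$ on $\Omega_\delta$ and $g$ itself satisfies $g|_{n=-\delta}=O(\delta)$ and $g$ is Lipschitz-continuous across the band of width $2\delta$ (so $|g|=O(\delta)$ throughout $\Omega_\delta$), the difference obeys $|\tilde g - g|\le |p(n)| + |g| = O(\delta)$ on $\Omega_\delta$ and $=0$ elsewhere. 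This already gives $|\tilde g-g|\le M\delta$; upgrading the bound to $M\delta^2$ uses the fact that near $C$ one has the \emph{matching} of values and first derivatives built into the construction, so the leading $O(\delta)$ contributions of $p$ and of $g$ cancel and only the second-order remainder survives — i.e. $p$ and $g$ agree with their common tangent at $n=-\delta$ to second order, hence $|\tilde g-g| = O(\delta^2)$ by a Taylor argument on the interval $[-\delta,\delta]$. Finally, differentiating $p(n)=(n-\delta)^3(a+b(n+\delta)+c(n+\delta)^2)$ once and twice and again using $|n\pm\delta|\le 2\delta$ with the orders of $a,b,c$ above yields $p'(n)=O(1)$ and $p''(n)=O(\delta^{-1})$.

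The main obstacle is the first step: making rigorous and uniform (in $\tau\in[0,T]$ and in position along $C$) the claim that the normal derivatives of $g$ at distance $\delta$ from its zero set are $O(\delta)$, $O(1)$, $O(\delta^{-1})$ with $\delta$-independent constants. One must check that the curve $C$ has curvature bounded uniformly on the relevant portion of $\Omega$, that the normal foliation is well-defined for $\delta$ small (so $x^o$ is unique — this is where $\delta>0$ small is used, and where the explicit $\vec n = (e^{x_1^o}, e^{x_2^o})/\sqrt{e^{2x_1^o}+e^{2x_2^o}}$ is handy), and that $|\nabla g|$ and $|\nabla^2 g|$ are bounded on the band; the exponential growth of $e^{x_i}$ is controlled because on $\Omega_\delta$ one has $e^{x_i}\le Ke^{-r\tau} + O(\delta)\le K + O(\delta)$, so everything stays bounded. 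Granting the one-dimensional result of Lin (cf. \cite{Lin:2008}) along each normal ray, the rest is the bookkeeping above.
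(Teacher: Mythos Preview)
The paper states this proposition without proof, so there is no argument in the text to compare against; your proposal must stand on its own.

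Your derivation of $p(n)=O(\delta)$, $p'(n)=O(1)$, $p''(n)=O(\delta^{-1})$ is sound: sizing $a,b,c$ from the data at $n=-\delta$ and then bounding $(n\pm\delta)$ by $2\delta$ on the band is exactly the right bookkeeping. (Your estimate $c=O(\delta^{-4})$ is actually not sharp---the two $O(\delta^{-1})$ contributions inside the bracket for $c$ cancel to leading order, because $g|_{n=-\delta}=\gamma\delta+O(\delta^{2})$ and $\partial_{n}g|_{n=-\delta}=-\gamma+O(\delta)$ share the same $\gamma=|\partial_{n}g|_{n=0^{-}}$, giving $c=O(\delta^{-3})$---but this refinement is not needed for the three asymptotics on $p$.)

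The genuine gap is the upgrade from $|\tilde g-g|=O(\delta)$ to $O(\delta^{2})$. Your Taylor argument ``$p$ and $g$ agree to second order at $n=-\delta$, hence the difference is $O(\delta^{2})$'' would require a uniform bound on the third derivative of $p-g$ over the band; but $p'''=O(\delta^{-2})$ (the term $6q(n)$ with $q=a+b(n+\delta)+c(n+\delta)^{2}=O(\delta^{-2})$ dominates), so the remainder over an interval of length $2\delta$ is only $O(\delta^{-2}\cdot\delta^{3})=O(\delta)$. Moreover $g$ itself has a kink at $n=0$, so no single Taylor expansion can be run across the full band. A direct check at the curve $C$ shows the bound is in fact sharp: taking the model case $g(n)=\max(-n,0)$ (so $\gamma=1$) one computes $a=-1/(8\delta^{2})$, $b=-5/(16\delta^{3})$, $c=0$, whence
\[
p(0)=-\delta^{3}\bigl(a+b\delta\bigr)=\tfrac{7}{16}\,\delta,
\]
while $g(0)=0$; thus $|\tilde g(0)-g(0)|=\tfrac{7}{16}\delta$ exactly, not $O(\delta^{2})$. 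The same leading term $\tfrac{7\gamma}{16}\delta$ persists for the actual $g$. Hence the pointwise $M\delta^{2}$ bound as stated cannot hold; only $|\tilde g-g|\le M\delta$ is true. (An $L^{1}$ bound over $\Omega$ would pick up an extra factor of $\delta$ from the measure of $\Omega_{\delta}$ and give $O(\delta^{2})$; possibly this is what is intended, or the exponent is a slip in the paper.) Either way, the cancellation you invoke does not occur, and that step of your argument should be withdrawn rather than sharpened.
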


\begin{figure}[htbp]
\centering
\includegraphics[width=0.7\textwidth]{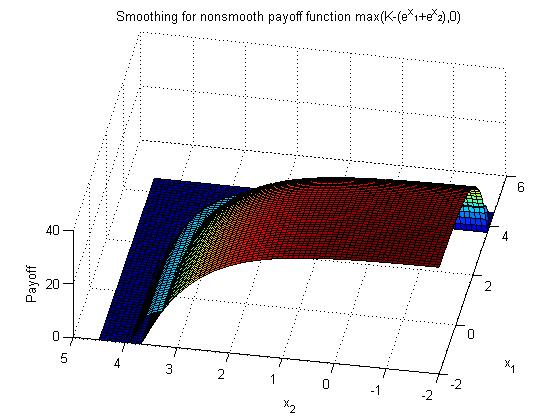}
\caption{Smoothing of the nonsmooth payoff function $max(K-(e^{x_1}+e^{x_2}),0)$ at the initial time.}
\label{Fig:TransformedSmoothvsNonsmooth}
\end{figure}

At time $\tau =0$, we can see the function after the smoothing in Figure \ref{Fig:TransformedSmoothvsNonsmooth}. We can directly apply the similar smoothing technique to the original payoff function $H(S)=\max(K-(S_1+S_2),0)$. Its smoothing version is shown in Figure \ref{Fig:SmoothvsNonsmooth}.
\begin{figure}[htbp]
\centering
\includegraphics[width=0.7\textwidth]{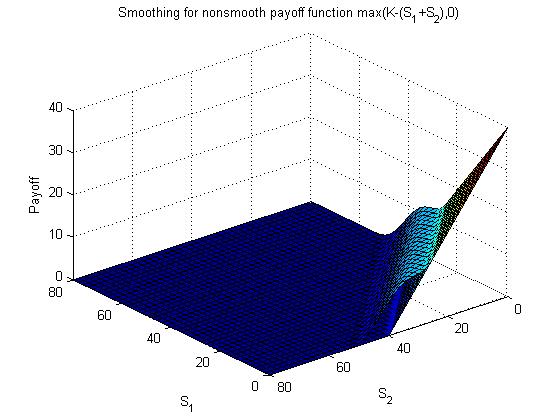}
\caption{Smoothing of the nonsmooth payoff function $\max(K-(S_1+S_2),0)$ in original coordinates $(S_1,S_2)$.}
\label{Fig:SmoothvsNonsmooth}
\end{figure}


Now we estimate the error between the solution (denoting as $u$) of (\ref{eq:PIDE4NonsmoothPayoff}) and the solution (denoting as $\tilde{u}$ of (\ref{eq:PIDE4SmoothedPayoff}), using a maximum principle given in \cite{Maria:1995}.
%
For this purpose we denote our PIDE operator
\[\A[u]=u_{\tau}-\D[u]-\J[u].\]
Then the error $e=u-\tilde{u}$ satisfies $\A[e]=0$, $e|_{\tau=0}=h-\tilde{h}$ and $e|_{\partial \Omega} =g-\tilde{g}$. The operator $\A$ is a special case (corresponding to $a_0=0$ and $f=0$) of the operator considered in \cite{Maria:1995} and all the conditions assumed in \cite{Maria:1995} are satisfied by the exponential \levy model we consider. Simply applying the maximum principle given in Theorem 4.1 of \cite{Maria:1995} for both $e$ and $-e$ (noting $\pm e|_{\tau=0} \leq |h-\tilde{h}|$ and $\pm e|_{\partial \Omega} \leq |g-\tilde{g}|$) and Proposition \ref{estimate_g} we obtain the following error estimate.
\begin{proposition}
For the solution $u$ of (\ref{eq:PIDE4NonsmoothPayoff}) and the solution $\tilde{u}$ of (\ref{eq:PIDE4SmoothedPayoff}) we have
\[|u-\tilde{u}| \leq M \delta^2 \quad \mbox{for all } \tau \in [0, T] \; \mbox{and } x \in \Omega, \]
where $M$ is a generic constant independent of $\delta$.
\end{proposition}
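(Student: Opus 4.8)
The plan is to upgrade the comparison argument sketched just above the statement into the quantitative bound, by combining the maximum principle of \cite{Maria:1995} with Proposition \ref{estimate_g}. First I would set $e = u - \tilde{u}$ and use the linearity of $\D$ and $\J$ to observe that $e$ solves the homogeneous problem $\A[e]=0$ on $[0,T]\times\Omega$ with $e|_{\tau=0}=h-\tilde{h}$ and $e|_{\partial\Omega}=g-\tilde{g}$. The operator $\A$ is exactly the one studied in \cite{Maria:1995}, in the special case $a_0=0$, $f=0$: the matrix $\kappa$ is symmetric positive definite whenever $|\rho|<1$, the drift $\alpha$ is constant, and $\J$ integrates against the Gaussian \levy measures $k_i(y_i)\,dy_i$, which have finite mass (the jump intensities $\lambda_i$ are finite) and all exponential moments, so that the compensator terms $\int_{\R}(e^{y_i}-1)k_i(y_i)\,dy_i$ and the second moments used in the well-posedness of (\ref{eq:PIDE4NonsmoothPayoff})--(\ref{eq:PIDE4SmoothedPayoff}) are finite. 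Hence the structural hypotheses under which Theorem 4.1 of \cite{Maria:1995} is stated hold here.

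Next I would apply that theorem to $e$ and to $-e$. Since $\A$ carries no zeroth-order term and $\A[\pm e]=0$, the maximum principle gives
\[
\sup_{[0,T]\times\Omega}(\pm e)\;\le\;\max\!\Big(\sup_{\Omega}\big(\pm(h-\tilde{h})\big),\;\sup_{[0,T]\times\partial\Omega}\big(\pm(g-\tilde{g})\big)\Big),
\]
and therefore $|u-\tilde{u}|\le\max\big(\|h-\tilde{h}\|_{L^\infty(\Omega)},\,\|g-\tilde{g}\|_{L^\infty([0,T]\times\partial\Omega)}\big)$ on $[0,T]\times\Omega$. Proposition \ref{estimate_g} supplies $|\tilde{g}-g|\le M\delta^2$ on all of $\Omega$ at every fixed $\tau$, and since $\tilde{h}=\tilde{g}(0,\cdot)$ and $h=g(0,\cdot)$ by construction, the same bound controls $|h-\tilde{h}|$. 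Substituting the two estimates yields $|u-\tilde{u}|\le M\delta^2$ uniformly in $(\tau,x)$, which is the assertion (with $M$ the constant produced by Proposition \ref{estimate_g}).

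The step I expect to require the most care is the legitimate invocation of the maximum principle on the unbounded domain $\Omega=\R^2$: comparison principles for PIDEs on all of $\R^2$ generally demand a growth restriction at infinity on the competing functions, without which one can add unbounded corrections. So I would record explicitly that $u$ and $\tilde{u}$ lie in the solution class of at most linear growth in $e^{x_1},e^{x_2}$ in which (\ref{eq:PIDE}) and (\ref{PIDE:OperatorForm}) are derived (cf. \cite{Zhou:2009}), that $g$ and $\tilde{g}$ share that growth, and that $u,\tilde{u}$ coincide with $g,\tilde{g}$ in the limits $x_1\to\pm\infty$ or $x_2\to\pm\infty$; then $e$ is bounded on $[0,T]\times\overline{\Omega}$ and Theorem 4.1 of \cite{Maria:1995} applies with no extra hypotheses. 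The only remaining technicality is to verify that the nonlocal operator $\J$ does not spoil the comparison near this ``boundary at infinity'' — which is the case because, as noted when the boundary data were constructed, no jumps carry mass outside $\Omega$ there — and this is precisely the configuration covered by the assumptions of \cite{Maria:1995}.
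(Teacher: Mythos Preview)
Your proposal is correct and follows essentially the same route as the paper: set $e=u-\tilde u$, use linearity to get $\A[e]=0$ with data $h-\tilde h$ and $g-\tilde g$, apply the maximum principle of \cite{Maria:1995} (Theorem~4.1, case $a_0=0$, $f=0$) to $\pm e$, and then invoke Proposition~\ref{estimate_g}. Your additional remarks on verifying the structural hypotheses and on the growth restriction needed for comparison on the unbounded domain are points the paper simply asserts (``all the conditions assumed in \cite{Maria:1995} are satisfied''), so your write-up is in fact more careful than the original.
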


\section{Finite Element Method for PIDE}
In this section, we will give error analysis for the localization of the unbounded domain $\Omega$ to a bounded domain and  for the temporal and spatial discretization of the PIDE under a weighted Sobolev setting.
\subsection{Variational setting}
Since the initial and boundary conditions grows exponentially at the boundary $\partial \Omega$, a weighted Sobolev space $H_{\eta}^{1}(\Omega)$ should be introduced to account for such effects. Given function $\eta(x) = - \left( \eta_1 |x_1| + \eta_2 |x_2| \right)$, we define the weighted Sobolev spaces:
\begin{eqnarray*}
L_{\eta}^2(\Omega) &\triangleq & \left\{ u \in L_{loc}^1(\Omega) \big | u e^{\eta(x)} \in L^2(\Omega) \right\}, \\
H_{\eta}^{1}(\Omega) & \triangleq & \left\{ u \in L_{loc}^{1}(\Omega)\big |u e^{\eta(x)} \in L^2(\Omega), \nabla u e^{\eta(x)}\in \left(L^2(\Omega)\right)^2 \right\}.
\end{eqnarray*}
We note that the functions $g,h,\tilde{g},\tilde{h} \in H_{\eta}^{1}(\Omega)$ for $\eta=(\eta_1,\eta_2)$ satisfying $\eta_i>1, i=1,2$. Later on we will use $\eta$ and $\eta(x)$ interchangeably. For any  vector and matrix $B$, $B^t$ is the transpose of $B$.

By picking up a test function $v\in C_0^{\infty}(\Omega)$ and recalling Green's formula, we consider the variational formulation of (\ref{eq:PIDE4SmoothedPayoff}), which is to

\emph{Find $\tilde{u} \in L^2 \left( [0,T];H_{\eta}^1(\Omega)\right) \bigcap
H^1 \left( [0,T];\left(H_{\eta}^1(\Omega)\right)^*\right)$
such that for any $\tau \in [0,T]$,
\begin{eqnarray}\label{eq:variational_formulation} \left \{
\begin{array}{rcll}
(\frac{\partial \tilde{u}}{\partial \tau},v)_{L_{\eta}^2(\Omega)}+a^{\eta}(\tilde{u},v)&=&\mathcal{A}[\tilde{g}],& \forall ~ v \in H_{\eta}^1(\Omega),\\
\tilde{u}(0,x)&=&0,& \forall ~ x \in \Omega,\\
\tilde{u}(\tau,x) &=& 0, & \forall ~ (\tau,x)\in [0,T]\times
\partial \Omega.
\end{array}
\right.
\end{eqnarray}
}
where for any function space $V$, $V^{*}$ is the dual of $V$ and
\begin{eqnarray}
\nonumber \mathcal{A}[\tilde{g}] &=& -\left(\frac{\partial \tilde{g}}{\partial \tau},v\right)_{H_{\eta}^1(\Omega)} - a^{\eta}(\tilde{g},v),\\
\nonumber \left(\frac{\partial \tilde{u}}{\partial \tau},v\right)_{L_{\eta}^2(\Omega)}&=&
\int_{\Omega}\frac{\partial \tilde{u}}{\partial \tau} v e^{2\eta(x)}dx,\\
\nonumber a^{\eta}(\tilde{u},v)&=&\int_{\Omega}\left( (\nabla \tilde{u})^t \kappa ~ \nabla v -\alpha(x)^t \nabla \tilde{u} ~v \right)e^{2\eta(x)}dx\\
\nonumber && \quad -\int_{\Omega}\J[\tilde{u}]~v e^{2\eta(x)} dx,\\
\nonumber \alpha(x)&=& \alpha + \kappa \nabla \eta,\\
\nonumber \alpha &=&(\alpha_1,\alpha_2)^t=(r-\frac{1}{2}\sigma_1^2, r-\frac{1}{2}\sigma_2^2)^t.
\end{eqnarray}

\begin{remark} If $\eta(x)$ is defined as follows:
\begin{eqnarray}\label{eta}
\eta(x)=\left\{
\begin{array}{ll}
\eta_{1} (|x_1|+ |x_2|)& \textrm{if $x_1<0,x_2<0$,}\\
\eta_{1} |x_1|+ \eta_2 |x_2| & \textrm{if $x_1<0,x_2>0$,}\\
\eta_{2} |x_1|+ \eta_1 |x_2| & \textrm{if $x_1>0,x_2<0$,}\\
\eta_{2} (|x_1|+ |x_2|)& \textrm{if $x_1>0,x_2>0$,}
\end{array}
\right.
\end{eqnarray}
where $\eta_1>1$ and $\eta_2 >1$,
then
\begin{eqnarray}\label{cond:gradient_of_eta}
\nabla \eta =
\left(%
\begin{array}{c}
  \frac{\partial\eta(x)}{\partial x_1} \\
  \frac{\partial\eta(x)}{\partial x_2} \\
 \end{array}%
\right) = \left\{
\begin{array}{ll}
(-\eta_1,-\eta_1)^t & \textrm{if $x_1<0,x_2<0$,}\\
(-\eta_1,\eta_2)^t & \textrm{if $x_1<0,x_2>0$,}\\
(\eta_2,-\eta_1)^t & \textrm{if $x_1>0,x_2<0$,}\\
(\eta_2,\eta_2)^t & \textrm{if $x_1>0,x_2>0$.}
\end{array}
\right.
\end{eqnarray}
Appendix \ref{AppendixA} provides the proof for the following property  $\eta(x)$ satisfies $\eta\in L_{loc}^1(\mathbb{R})$,
$\nabla \eta \in (L^{\infty}(\mathbb{R}))^2$ and
\begin{eqnarray}
\Delta_{\theta}\eta=\eta(x+\theta y)-\eta(x)\leq \eta(y), \quad
\forall x,y \in \mathbb{R}^2,\norm{\theta}\leq 1.
\end{eqnarray}
\end{remark}

\begin{remark}
Note that $a^{\eta}(\tilde{u},v)$ is a nonsymmetric bilinear variational
formulation. when $|\rho| < 1$, $\kappa =\frac{1}{2}\left(%
\begin{array}{cc}
  \sigma_1^2 & \rho \sigma_1 \sigma_2 \\
  \rho \sigma_1 \sigma_2 & \sigma_2^2 \\
\end{array}%
\right)$ is symmetric positive definite, which implies that there
exist two positive numbers $0<\underline{\kappa}\leq
\overline{\kappa}$ such that $\underline{\kappa}|\xi|^2\leq \xi^t
\kappa \xi \leq \overline{\kappa}|\xi|^2$, $~ \forall ~ \xi \in
\mathbb{R}^2$. Also $\alpha(x)$ is uniformly bounded, i.e., given
$\alpha(x)=\left(\alpha_1(x),\alpha_2(x)\right)^t$, there exists
$\overline{\alpha}>0$ such that $|\alpha_i(x)|\leq
\overline{\alpha},\forall~ x\in \Omega,i =1,2.$
\end{remark}

To consider the continuity  and coercivity of the non-symmetric bilinear form $a^{\eta}(\cdot,\cdot)$ in coping with the uniqueness and existence of numerical solution
of variational formulation (\ref{eq:variational_formulation}), we introduce the following \garding inequality. For later convenience, we denote
$\nabla_{\eta} \tilde{u} = \nabla \tilde{u} ~ e^{\eta(x)}$ and $\nabla_{\eta} v =
\nabla v ~ e^{\eta(x)}.$

\begin{proposition}\label{prop:garding_inequality}
Let $\eta \in \mathbb{R}_+^2$ be arbitrarily fixed. If $|\rho|< 1,$
then
\begin{enumerate}
    \item The bilinear form $a^{\eta}(\cdot,\cdot): H_{\eta}^1(\Omega)\times H_{\eta}^1(\Omega)\rightarrow
    \mathbb{R}$ is continuous, i.e., there exists $\overline{c}>0$ such that
    \[|a^{\eta}(u,v)|\leq \overline{c} \norm{u}_{H_{\eta}^1(\Omega)}\norm{v}_{H_{\eta}^1(\Omega)},
\quad \forall~ u,v \in H_{\eta}^1(\Omega);\]
    \item  There exists $\beta>0$ depending on $\eta$ and $\rho$ such that the new bilinear form $a^{\eta}(u,u)+\beta\cdot(u,u)_{L_{\eta}^2(\Omega)}$ is coercive, i.e., there exists $0 \leq \underline{c}\leq \overline{c}$ depending on $\eta$ and $\rho$ such that (\garding inequality)
    \[a^{\eta}(u,u)\geq \underline{c}\norm{u}_{H_{\eta}^1(\Omega)}^2-\beta \norm{u}_{L_{\eta}^2(\Omega)}^2, \quad \forall ~ u \in H_{\eta}^1(\Omega).\]
\end{enumerate}
\end{proposition}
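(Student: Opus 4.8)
The plan is to split the bilinear form into the local (differential) part and the nonlocal (jump) part, and to bound each separately, using the weight $e^{2\eta}$ to turn each term into an integral of $\nabla_\eta u$, $\nabla_\eta v$ and $u e^\eta$, $v e^\eta$. For continuity, write
\[
a^{\eta}(u,v) = \int_\Omega (\nabla_\eta u)^t \kappa \,\nabla_\eta v\,dx - \int_\Omega \alpha(x)^t (\nabla_\eta u)\, (v e^{\eta})\,dx - \int_\Omega \J[u]\,(v e^{\eta})\,e^{\eta}\,dx.
\]
The first term is bounded by $\overline{\kappa}\,\norm{\nabla_\eta u}_{L^2}\norm{\nabla_\eta v}_{L^2}$ by Cauchy--Schwarz and the upper spectral bound on $\kappa$; the second by $\overline{\alpha}\,\norm{\nabla_\eta u}_{L^2}\norm{v e^\eta}_{L^2}$ using the uniform bound on $\alpha(x) = \alpha + \kappa\nabla\eta$ (here $\nabla\eta\in (L^\infty)^2$, cf.\ the Remark). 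The jump term is the point requiring care: one must show $u \mapsto \J[u]$ maps $H^1_\eta$ into $L^2_\eta$ boundedly. Using the Gaussian \levy densities $k_i$, which have finite moments of all orders and in particular $\int_\R (e^{y_i}-1)^2 k_i(y_i)\,dy_i < \infty$, and the key translation estimate $\eta(x+\theta y) - \eta(x) \le \eta(y)$ from the Remark, a change of variables $x \mapsto x + y_i \e_i$ inside $\int_\Omega |u(\tau,x+y_i\e_i)|^2 e^{2\eta(x)}dx$ produces a factor $e^{2\eta(y_i\e_i)} = e^{-2\eta_i|y_i|}$, which is integrable against $k_i$; Taylor-expanding the integrand of $\J$ and applying Cauchy--Schwarz (Minkowski's integral inequality) in $y_i$ then gives $\norm{\J[u]}_{L^2_\eta} \le C\,\norm{u}_{H^1_\eta}$. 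Combining the three bounds yields part 1.

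For the \garding inequality (part 2), take $v = u$. The principal term gives $\int_\Omega (\nabla_\eta u)^t\kappa\,\nabla_\eta u\,dx \ge \underline{\kappa}\,\norm{\nabla_\eta u}_{L^2}^2$ by the lower spectral bound, which is where $|\rho|<1$ is used. The first-order term $-\int_\Omega \alpha(x)^t(\nabla_\eta u)\,(u e^\eta)\,dx$ is handled by Young's inequality: it is bounded in absolute value by $\tfrac{\underline{\kappa}}{2}\norm{\nabla_\eta u}_{L^2}^2 + \tfrac{\overline{\alpha}^2}{2\underline{\kappa}}\norm{u e^\eta}_{L^2}^2$, absorbing half the gradient term and throwing the rest into the $L^2_\eta$-norm. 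The jump term $-\int_\Omega \J[u]\,u\,e^{2\eta}dx$ is, by part 1's estimate, bounded by $C\norm{u}_{H^1_\eta}\norm{u}_{L^2_\eta} \le \tfrac{\underline{\kappa}}{4}\norm{\nabla_\eta u}_{L^2}^2 + C'\norm{u}_{L^2_\eta}^2$, again via Young. Collecting terms, $a^\eta(u,u) \ge \tfrac{\underline{\kappa}}{4}\norm{\nabla_\eta u}_{L^2}^2 - \beta\norm{u}_{L^2_\eta}^2$ for a suitable $\beta = \beta(\eta,\rho) > 0$; adding $\tfrac{\underline{\kappa}}{4}\norm{u}_{L^2_\eta}^2$ to both sides (and enlarging $\beta$) gives the stated form with $\underline{c} = \tfrac{\underline{\kappa}}{4}$.

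The main obstacle is the nonlocal term: one must verify that the compensated jump operator $\J$ is well-defined and bounded on the \emph{weighted} space, which hinges on two ingredients working together --- the exponential integrability of the Gaussian \levy measures (so that $\int (e^{y_i}-1)^2 k_i\,dy_i$ and $\int e^{-2\eta_i|y_i|}k_i\,dy_i$ are finite) and the subadditivity-type bound $\Delta_\theta\eta \le \eta(y)$ that controls how the weight changes under the jump translation. The compensator term $(e^{y_i}-1)\partial u/\partial x_i$ must be kept together with the difference $u(x+y_i\e_i) - u(x)$ and estimated via a first-order Taylor remainder in $y_i$ near $y_i = 0$ (where cancellation is essential), while for $|y_i|$ bounded away from $0$ the three terms are bounded individually; splitting the $y_i$-integral at, say, $|y_i|\le 1$ makes this routine. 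Everything else is a standard Cauchy--Schwarz/Young bookkeeping exercise.
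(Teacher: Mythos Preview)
Your strategy matches the paper's: split $a^\eta$ into the $\kappa$-, $\alpha$-, and jump parts; bound the first two by the spectral bounds on $\kappa$ and the uniform bound on $\alpha(x)$; control the jump via a first-order Taylor representation together with the translation behaviour of the weight; then obtain coercivity from the lower spectral bound $\underline{\kappa}$ and Young's inequality. Two small points where the paper's execution differs from your sketch are worth noting. First, the weight-translation factor has the opposite sign: after the change of variables one gets $e^{2\eta(z-y_i\e_i)}\le e^{2\eta(z)}e^{+2\eta_i|y_i|}$ (equivalently, in the paper's proof, $\exp\big(\eta_i(|x_i+\theta_i y|-|x_i|)\big)\le e^{\eta_i|y|}$), so the Gaussian tail of $k_i$ is genuinely what makes $\int e^{\eta_i|y|}|y|\,k_i(y)\,dy<\infty$; your stated factor $e^{-2\eta_i|y_i|}$ would be trivially integrable and would not explain why the Gaussian hypothesis matters. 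Second, the split at $|y_i|\le 1$ and the ``cancellation near $0$'' are unnecessary here because the Merton density $k_i$ is bounded with finite total mass: the paper simply writes $u(x+y\e_i)-u(x)=y\int_0^1\partial_{x_i}u(x+\theta y\e_i)\,d\theta$ for \emph{all} $y$, estimates the compensator $(e^y-1)\partial_{x_i}u$ separately, and arrives at the explicit constant $c_i=\int_{\R}\big(e^{\eta_i|y|}|y|+|e^y-1|\big)k_i(y)\,dy$, giving $\overline{c}=\overline{\kappa}+\overline{\alpha}+\max(c_1,c_2)$ and, after the $\varepsilon$-Young step, $\underline{c}=\underline{\kappa}-\varepsilon\overline{\alpha}-\varepsilon\max(c_1,c_2)$.
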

\begin{proof}
From the definition of $a(u,v),$ we know
\begin{eqnarray}\label{eq:cond_for_diff_term}
\nonumber |a^{\eta}(u,v)|& \leq & \int_{\Omega}\left| \left(\nabla u ~ e^{\eta(x)}\right)^t \kappa ~ \left(\nabla v ~ e^{\eta(x)}\right)\right| dx\\
&& + \int_{\Omega}\left| \alpha(x)^t \left(\nabla u ~
e^{\eta(x)}\right) \cdot\left(v~ e^{\eta(x)}\right)\right| dx +
|a_{jump}^{\eta}(u,v)|,
\end{eqnarray}
where $a_{jump}^{\eta}(u,v)=\int_{\Omega} \J[u]~v e^{2\eta(x)}dx.$

For any vectors $\nabla_{\eta} u, \nabla_{\eta} v$ and real number $\lambda$, $\int_{\Omega}(\nabla_{\eta} u + \lambda \nabla_{\eta} v)^t \kappa ~ (\nabla_{\eta} u + \lambda \nabla_{\eta} v)dx > 0$. The first term of RHS in (\ref{eq:cond_for_diff_term}) can be written as follows
\begin{eqnarray*}
\int_{\Omega}\left| \left(\nabla_{\eta} u \right)^t \kappa ~
\nabla_{\eta} v \right| dx &\leq&  \left ( \int_{\Omega}\left|
\left(\nabla_{\eta} u \right)^t \kappa ~ \nabla_{\eta} u
\right|dx\right)^{\frac{1}{2}}  \cdot
\left (\int_{\Omega}\left| \left(\nabla_{\eta} v \right)^t \kappa ~ \nabla_{\eta} v \right| dx \right ) ^{\frac{1}{2}} \\
&\leq & \overline{\kappa} \left ( \int_{\Omega}\left|
\left(\nabla_{\eta} u \right)^t ~ \nabla_{\eta} u
\right|dx\right)^{\frac{1}{2}}  \cdot
\left (\int_{\Omega}\left| \left(\nabla_{\eta} v \right)^t ~ \nabla_{\eta} v \right| dx \right ) ^{\frac{1}{2}}\\
&=&  \overline{\kappa} ~ \norm{\nabla u}_{L_{\eta}^2(\Omega)}
\norm{\nabla v}_{L_{\eta}^2(\Omega)} \leq \overline{\kappa} ~
\norm{u}_{H_{\eta}^1(\Omega)} \norm{v}_{H_{\eta}^1(\Omega)}.
\end{eqnarray*}
Similarly, rewriting the second term of RHS in
(\ref{eq:cond_for_diff_term}) as following:
\begin{eqnarray}\label{ineqality:continuity_for_convection_term}
\nonumber \int_{\Omega}\left| \alpha(x)^t \nabla_{\eta}u  \cdot
\left(v~ e^{\eta(x)}\right)\right| dx &\leq & \overline{\alpha}
\left(\norm{\frac{\partial u}{\partial x_1}}_{L_{\eta}^2(\Omega)} +
\norm{\frac{\partial u}{\partial x_2}}_{L_{\eta}^2(\Omega)}\right)
\norm{v}_{L_{\eta}^2(\Omega)}\\
&\leq &  \overline{\alpha} \norm{\nabla u}_{L_{\eta}^2(\Omega)}
\norm{v}_{L_{\eta}^2(\Omega)} \leq \overline{\alpha} ~
\norm{u}_{H_{\eta}^1(\Omega)} \norm{v}_{H_{\eta}^1(\Omega)}.
\end{eqnarray}

To consider the last term of RHS in (\ref{eq:cond_for_diff_term}) we need to add $\exp(-\eta_1|x_1+\theta_1 y|-\eta_2 |x_2|)$ and
$\exp(-\eta_1|x_1|-\eta_2 |x_2+\theta_2 y|)$ to $\frac{\partial
u}{\partial x_1}(\tau,x_1+\theta_1y,x_2)$ and $\frac{\partial
u}{\partial x_2}(\tau,x+\theta_2y)$, respectively. Thus there
will have extra terms $\exp\left(\eta_i(|x_i+\theta_i
y|-|x_i|)\right)$ satisfying: $\exp\left(\eta_i(|x_i+\theta_i
y|-|x_i|)\right)\leq \exp (\eta_i |y| )$ for $0 \leq \theta_i\leq
1,i=1,2.$ Therefore the last term of RHS in (\ref{eq:cond_for_diff_term}) can also be rewritten as
\begin{eqnarray*}
|a_{jump}^{\eta}(u,v)| &\leq& \int_{\Omega}\int_{\mathbb{R}}\int_0^1
\big|\frac{\partial u}{\partial x_1}(\tau,x_1+\theta_1y,x_2)
e^{\eta(x)}
ve^{\eta(x)}\big | |y|k_1(y) d\theta_1  dy dx \\
&&+\int_{\Omega}\int_{\mathbb{R}}\big|\frac{\partial u}{\partial x_1}(\tau,x) e^{\eta(x)} v e^{\eta(x)} \big|\cdot|e^y-1|k_1(y)dydx\\
&&+\int_{\Omega}\int_{\mathbb{R}}\int_0^1
\big|\frac{\partial u}{\partial x_2}(\tau,x+\theta_2y)e^{\eta(x)} v e^{\eta(x)}\big| |y|k_2(y) d\theta_2 dy dx \\
&&+\int_{\Omega}\int_{\mathbb{R}}\big|\frac{\partial u}{\partial x_2}(\tau,x) e^{\eta(x)} v e^{\eta(x)} \big|\cdot|e^y-1|k_2(y)dydx\\
&\leq& c_1 \norm{\frac{\partial u}{\partial
x_1}}_{L_{\eta}^2(\Omega)} \norm{v}_{L_{\eta}^2(\Omega)}+ c_2
\norm{\frac{\partial u}{\partial x_2}}_{L_{\eta}^2(\Omega)}
\norm{v}_{L_{\eta}^2(\Omega)}\\
&\leq&\max(c_1,c_2)\cdot \norm{\nabla
u}_{L_{\eta}^2(\Omega)}\norm{v}_{L_{\eta}^2(\Omega)} \\
&\leq&\max(c_1,c_2) \norm{u}_{H_{\eta}^1(\Omega)}\norm{v}_{H_{\eta}^1(\Omega)},
\end{eqnarray*}
where
\[c_i=\int_{\mathbb{R}}\left(e^{\eta_i |y|}|y|+|e^y-1|\right)k_i(y)dy,i=1,2.\]
For Gaussian $k_i(y)$, $c_1,c_2$ are positive and finite. Thus we have the continuity condition of $a^{\eta}(u,v)$
\[|a^{\eta}(u,v)|\leq \overline{c} \norm{ u}_{H_{\eta}^1(\Omega)}\norm{v}_{H_{\eta}^1(\Omega)},\]
where
$\overline{c}=\overline{\kappa}+\overline{\alpha}+\max(c_1,c_2).$

Now the coercivity of $a^{\eta}(u,u)$ remains to be proved. Since
\begin{eqnarray*}
a^{\eta}(u,u)&=&\int_{\Omega} \left(\nabla u ~ e^{\eta(x)}\right)^t \kappa ~ \left(\nabla u ~ e^{\eta(x)}\right) dx\\
&& - \int_{\Omega} \alpha(x)^t \left(\nabla u ~ e^{\eta(x)}\right)
\cdot\left(u~ e^{\eta(x)}\right) dx - a_{jump}^{\eta}(u,u),
\end{eqnarray*}
then
\begin{eqnarray}\label{eq:cond_for_diff_term1}
\nonumber a^{\eta}(u,u)&\geq&\int_{\Omega}\left| \left(\nabla u ~ e^{\eta(x)}\right)^t \kappa ~ \left(\nabla u ~ e^{\eta(x)}\right)\right| dx\\
&& - \int_{\Omega}\left| \alpha(x)^t \left(\nabla u ~
e^{\eta(x)}\right) \cdot\left(u~ e^{\eta(x)}\right)\right| dx -
|a_{jump}^{\eta}(u,u)|.
\end{eqnarray}

Since $\kappa$ is positive definite when $|\rho|<1$, the first term of (\ref{eq:cond_for_diff_term1}) can be simplified as
\begin{eqnarray}\label{coercivity-quartic}
\nonumber \int_{\Omega}\left| \left(\nabla_{\eta} u\right)^t \kappa ~ \nabla_{\eta} u \right| dx & \geq & \underline{\kappa} \int_{\Omega}\left| \left(\nabla_{\eta} u\right)^t \nabla_{\eta} u \right| dx \\
&=& \underline{\kappa}\left(\norm{\frac{\partial u}{\partial x_1}}_{L_{\eta}^2(\Omega)}^2+\norm{\frac{\partial u}{\partial x_2}}_{L_{\eta}^2(\Omega)}^2 \right).
\end{eqnarray}
By (\ref{ineqality:continuity_for_convection_term}), the second term of RHS in (\ref{eq:cond_for_diff_term1}) can also be rewritten as
\begin{eqnarray}
 \nonumber -\int_{\Omega}\left| \alpha(x)^t \nabla_{\eta}u \cdot\left(u~ e^{\eta(x)}\right)\right| dx \\
\geq -\overline{\alpha} \left(\norm{\frac{\partial u}{\partial x_1}}_{L_{\eta}^2(\Omega)} +
\norm{\frac{\partial u}{\partial x_2}}_{L_{\eta}^2(\Omega)}\right) \norm{v}_{L_{\eta}^2(\Omega)}.
\end{eqnarray}
Finally, we analyze the last integral term of RHS of (\ref{eq:cond_for_diff_term1}) in the following form
\begin{eqnarray}\label{coercivity-integral}
\nonumber -|a_{jump}^{\eta}(u,u)| &\geq&
\int_{\Omega}\int_{\mathbb{R}}\int_0^1
\big|\frac{\partial u}{\partial x_1}(\tau,x_1+\theta_1y,x_2) e^{\eta(x)} u e^{\eta(x)}\big | |y|k_1(y) d\theta_1  dy dx \\
\nonumber &&-\int_{\Omega}\int_{\mathbb{R}}\big|\frac{\partial u}{\partial x_1}(\tau,x) e^{\eta(x)} u e^{\eta(x)} \big|\cdot|e^y-1|k_1(y)dydx\\
\nonumber &&-\int_{\Omega}\int_{\mathbb{R}}\int_0^1
\big|\frac{\partial u}{\partial x_2}(\tau,x+\theta_2y)e^{\eta(x)} u e^{\eta(x)}\big| |y|k_2(y) d\theta_2 dy dx \\
\nonumber &&-\int_{\Omega}\int_{\mathbb{R}}\big|\frac{\partial u}{\partial x_2}(\tau,x) e^{\eta(x)} u e^{\eta(x)} \big|\cdot|e^y-1|k_2(y)dydx\\
 &\geq& -c_1 \norm{\frac{\partial u}{\partial x_1}}_{L_{\eta}^2(\Omega)} \norm{u}_{L_{\eta}^2(\Omega)}- c_2  \norm{\frac{\partial u}{\partial x_2}}_{L_{\eta}^2(\Omega)}  \norm{u}_{L_{\eta}^2(\Omega)}.
\end{eqnarray}
Then from (\ref{eq:cond_for_diff_term1}-\ref{coercivity-integral}), we obtain
\begin{eqnarray*}
a^{\eta}(u,u)&\geq&\underline{\kappa}\left(\norm{\frac{\partial
u}{\partial x_1}}_{L_{\eta}^2(\Omega)}^2+\norm{\frac{\partial
u}{\partial x_2}}_{L_{\eta}^2(\Omega)}^2 \right)
-\sum_{i=1}^2(\overline{\alpha}+c_i) \norm{\frac{\partial u}{\partial
x_i}}_{L_{\eta}^2(\Omega)}\norm{u}_{L_{\eta}^2(\Omega)}.
\end{eqnarray*}
For any $\varepsilon >0$, we have
\[\norm{\frac{\partial u}{\partial
x_i}}_{L_{\eta}^2(\Omega)}\norm{u}_{L_{\eta}^2(\Omega)}\leq
\varepsilon \norm{\frac{\partial u}{\partial
x_i}}_{L_{\eta}^2(\Omega)}^2+\frac{1}{4\varepsilon}\norm{u}_{L_{\eta}^2(\Omega)}^2.
\]
By selecting $\varepsilon$ satisfying $\underline{\kappa}-\varepsilon(\overline{\alpha}+c_i)>0$ and
defining
\begin{eqnarray*}
\underline{c}&=&\underline{\kappa}-\varepsilon \overline{\alpha}- \varepsilon \max(c_1,c_2)>0,\\
\beta &=& \underline{c}+\frac{1}{4\varepsilon}(2\overline{\alpha}+c_1+c_2)>0,
\end{eqnarray*}
we have the coercivity property:
\[a^{\eta}(u,u)\geq \underline{c}\norm{u}_{H_{\eta}^1(\Omega)}^2-\beta\norm{u}_{L_{\eta}^2(\Omega)}^2.\]
\end{proof}

\begin{proposition}\label{prop:estimationfortilde_u}The variational formulation (\ref{eq:variational_formulation}) has a unique solution $\tilde{u}$ and the following estimate, for all $\tau \in [0,T]$
\begin{eqnarray*}\label{eq:priori-estimation}
\begin{array}{l}
\dsp e^{-2\beta \tau}\norm{\tilde{u}}_{L_{\eta}^2(\Omega)}^2+\underline{c} \int_0^{\tau}e^{-2\beta s}\norm{\tilde{u}}_{H_{\eta}^1(\Omega)}^2ds \\[2mm]
\dsp \qquad \qquad \leq \frac{1+\overline{c}}{\underline{c}}\int_{0}^{\tau}e^{-2\beta s}\left(\norm{\frac{\partial \tilde{g}}{\partial
s}}_{L_{\eta}^2(\Omega)}^2+\overline{c}\norm{\tilde{g}}_{H_{\eta}^1(\Omega)}^2\right)ds,
\end{array}
\end{eqnarray*}
where $\underline{c}$ and $\overline{c}$ are the constants in Proposition \ref{prop:garding_inequality}.
\end{proposition}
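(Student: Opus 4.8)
The plan is to treat (\ref{eq:variational_formulation}) as an abstract linear parabolic problem over the Gelfand triple $H_{\eta}^1(\Omega)\hookrightarrow L_{\eta}^2(\Omega)\hookrightarrow (H_{\eta}^1(\Omega))^*$ and to split the argument into (i) well-posedness and (ii) the a priori bound, both driven by the \garding inequality of Proposition \ref{prop:garding_inequality}. For each $\tau$ the right-hand side $\mathcal{A}[\tilde g]$ is a bounded linear functional on $H_{\eta}^1(\Omega)$, its two pieces being controlled by $\norm{\partial_\tau\tilde g}_{L_{\eta}^2(\Omega)}$ and, through continuity, by $\overline c\,\norm{\tilde g}_{H_{\eta}^1(\Omega)}$, so the problem falls squarely within the standard theory for evolution equations that are coercive up to a zeroth-order shift.

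For existence and uniqueness I would first remove the $-\beta\norm{\cdot}_{L_{\eta}^2(\Omega)}^2$ defect in the \garding inequality by the substitution $w=e^{-\beta\tau}\tilde u$, which replaces $a^\eta(\cdot,\cdot)$ by the genuinely coercive form $a^\eta(\cdot,\cdot)+\beta(\cdot,\cdot)_{L_{\eta}^2(\Omega)}$ and leaves a problem of the same type with modified but still admissible data. I would then run a Galerkin scheme in a countable basis of the separable space $H_{\eta}^1(\Omega)$: the projected system is a linear ODE with an invertible $L_{\eta}^2$-Gram matrix, hence uniquely solvable; the energy estimate below, applied to each Galerkin approximant, furnishes uniform bounds in $L^2(0,T;H_{\eta}^1(\Omega))$ and $L^\infty(0,T;L_{\eta}^2(\Omega))$ together with a dual bound on $\partial_\tau w$ in $(H_{\eta}^1(\Omega))^*$; weak compactness then yields a solution, and uniqueness follows from the same estimate applied to the difference of two solutions with zero data. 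Equivalently one may simply invoke the Lions--Lax--Milgram parabolic existence theorem, whose hypotheses are exactly continuity plus the \garding inequality.

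For the quantitative estimate I would test with $v=\tilde u$, use $(\partial_\tau\tilde u,\tilde u)_{L_{\eta}^2(\Omega)}=\frac{1}{2}\frac{d}{d\tau}\norm{\tilde u}_{L_{\eta}^2(\Omega)}^2$, bound the left side from below by the \garding inequality, and bound $\mathcal{A}[\tilde g]=-(\partial_\tau\tilde g,\tilde u)_{L_{\eta}^2(\Omega)}-a^\eta(\tilde g,\tilde u)$ by Cauchy--Schwarz in $L_{\eta}^2(\Omega)$ and the continuity constant $\overline c$. After using $\norm{\tilde u}_{L_{\eta}^2(\Omega)}\le\norm{\tilde u}_{H_{\eta}^1(\Omega)}$ this yields
\[
\frac{1}{2}\frac{d}{d\tau}\norm{\tilde u}_{L_{\eta}^2(\Omega)}^2+\underline c\,\norm{\tilde u}_{H_{\eta}^1(\Omega)}^2-\beta\norm{\tilde u}_{L_{\eta}^2(\Omega)}^2\le\big(\norm{\partial_\tau\tilde g}_{L_{\eta}^2(\Omega)}+\overline c\,\norm{\tilde g}_{H_{\eta}^1(\Omega)}\big)\norm{\tilde u}_{H_{\eta}^1(\Omega)}.
\]
Young's inequality $XY\le\frac{\underline c}{2}X^2+\frac{1}{2\underline c}Y^2$ absorbs half of the coercive term, while Cauchy--Schwarz on the pair gives $Y^2\le(1+\overline c)\big(\norm{\partial_\tau\tilde g}_{L_{\eta}^2(\Omega)}^2+\overline c\,\norm{\tilde g}_{H_{\eta}^1(\Omega)}^2\big)$. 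Multiplying by the integrating factor $e^{-2\beta\tau}$ turns the left side into $\frac{1}{2}\frac{d}{d\tau}\big(e^{-2\beta\tau}\norm{\tilde u}_{L_{\eta}^2(\Omega)}^2\big)+\frac{\underline c}{2}e^{-2\beta\tau}\norm{\tilde u}_{H_{\eta}^1(\Omega)}^2$; integrating in $\tau$ with $\tilde u(0)=0$ and multiplying by $2$ produces exactly the claimed bound.

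The step I expect to be delicate is not the algebra above but the rigorous justification that $v=\tilde u$ is an admissible test function and that the time-derivative identity holds: this requires the embedding $L^2(0,T;H_{\eta}^1(\Omega))\cap H^1(0,T;(H_{\eta}^1(\Omega))^*)\hookrightarrow C([0,T];L_{\eta}^2(\Omega))$ and the Lions--Magenes integration-by-parts-in-time formula, which in turn rests on verifying that the weighted triple is a genuine Gelfand triple with $L_{\eta}^2(\Omega)$ as pivot. This is where the weight $e^{\eta}$ must be treated carefully; it follows from the density of $C_0^\infty(\Omega)$ and the local boundedness of $\eta$ together with $\nabla\eta\in(L^\infty(\mathbb{R}))^2$ from the Remark, most transparently by transporting everything to the unweighted spaces through the isometry $u\mapsto u\,e^{\eta}$.
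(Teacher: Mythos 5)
Your proposal is correct and follows essentially the same route as the paper: the paper likewise dispatches existence and uniqueness by citing Lions--Magenes and derives the bound by testing with $v=\tilde{u}\,e^{-2\beta s}$ (equivalent to your test with $v=\tilde{u}$ followed by the integrating factor $e^{-2\beta\tau}$), then applying the \garding inequality and Young's inequality with $\epsilon=\frac{\underline{c}}{1+\overline{c}}$, which produces the identical constant $\frac{1+\overline{c}}{\underline{c}}$. Your Galerkin construction and your attention to the Gelfand-triple and integration-by-parts-in-time justification simply spell out what the paper delegates to the cited reference.
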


\begin{proof} Since $\tilde{h}(x)\in L_{\eta}^2(\Omega),$ the existence and uniqueness of variational formulation (\ref{eq:variational_formulation}) can be obtained (see \cite{Lions:1972}).

Taking $v(s,x)=\tilde{u}(s,x)e^{-2 \beta s}$ in (\ref{eq:variational_formulation}) and integrating in time $s$ between $0$ and $\tau,$ we derive that:
\begin{eqnarray*}
\frac{1}{2}~e^{-2\beta \tau}\norm{\tilde{u}}_{L_{\eta}^2(\Omega)}^2
&+&\int_0^{\tau} e^{-2\beta s}\left( a^{\eta}(\tilde{u},\tilde{u}) + \beta\cdot(\tilde{u},\tilde{u})_{L_{\eta}^2(\Omega)} \right)ds \\
&=&-\int_0^{\tau}e^{-2\beta s}\left(\frac{\partial \tilde{g}}{\partial s},\tilde{u}\right)_{L_{\eta}^2(\Omega)}ds-
\int_{0}^{\tau}e^{-2\beta s}a^{\eta}(\tilde{g},\tilde{u})ds.
\end{eqnarray*}
Using \garding inequality in Proposition \ref{prop:garding_inequality}, we have
\begin{eqnarray*}
\frac{1}{2}~e^{-2\beta \tau}\norm{\tilde{u}}_{L_{\eta}^2(\Omega)}^2&+&\underline{c}\int_0^{\tau}e^{-2\beta
s}\norm{\tilde{u}}_{H_{\eta}^1(\Omega)}^2ds\\
&\leq& -\int_0^{\tau}e^{-2\beta s}\left(\frac{\partial \tilde{g}}{\partial
s},\tilde{u}\right)_{L_{\eta}^2(\Omega)}ds-\int_{0}^{\tau}e^{-2\beta s}a^{\eta}(\tilde{g},\tilde{u})ds\\
&\leq& \frac{1}{2\epsilon}\int_{0}^{\tau}e^{-2\beta s}\left(\norm{\frac{\partial \tilde{g}}{\partial s}}_{L_{\eta}^2(\Omega)}^2+\overline{c}\norm{\tilde{g}}_{H_{\eta}^1(\Omega)}^2\right)ds\\
&&+\frac{1}{2}\epsilon \int_{0}^{\tau}e^{-2\beta s} \left(\norm{\tilde{u}}_{L_{\eta}^2(\Omega)}^2 + \overline{c}\norm{\tilde{u}}_{H_{\eta}^1(\Omega)}^2\right)ds.
\end{eqnarray*}
Thus we obtain the following estimation
\begin{eqnarray*}
e^{-2\beta
\tau}\norm{\tilde{u}}_{L_{\eta}^2(\Omega)}^2&+&\left(2\underline{c}-\epsilon(1+\overline{c})\right)
\int_0^{\tau}e^{-2\beta s}\norm{\tilde{u}}_{H_{\eta}^1(\Omega)}^2ds\\
&\leq& \frac{1}{\epsilon}\int_{0}^{\tau}e^{-2\beta
s}\left(\norm{\frac{\partial \tilde{g}}{\partial
s}}_{L_{\eta}^2(\Omega)}^2+\overline{c}\norm{\tilde{g}}_{H_{\eta}^1(\Omega)}^2\right)ds.
\end{eqnarray*}
Let $\epsilon=\frac{\underline{c}}{1+\overline{c}}>0$, a priori estimation (\ref{eq:priori-estimation}) is obtained.
\end{proof}

\subsection{Localization to a bounded domain and its error estimate}
Since it is not feasible to derive a numerical solution in an unbounded domain $\Omega$ for \emph{variational formulation} (\ref{eq:variational_formulation}) corresponding to the pricing equation (\ref{eq:PIDE4NonsmoothPayoff}), we need to truncate $\Omega$ to a bounded computational domain $\Omega_{M}=[-M,M]\times [-M,M]$. The domain truncation technique is not necessary for some exotic options such as double barrier option, which directly results in PIDE on a naturally bounded domain.

Instead of solving (\ref{eq:PIDE4SmoothedPayoff}) with smoothed boundary condition $\tilde{g}$ in an unbounded domain $\Omega\times [0,T]$, we will solve the truncated problem on $\Omega_M\times [0,T]:$
\begin{eqnarray}\label{eq:truncated_call}
\left\{\begin{array}{l}
(\tilde{u}_M)_{\tau}=\nabla\cdot (\kappa \nabla \tilde{u}_M)+ \nabla \cdot(\alpha \tilde{u}_M)+\J[\tilde{u}_M],\\[1mm]
\tilde{u}_M|_{\tau=0} = \tilde{h}(x), \quad x \in \Omega_M,\\[1mm]
\tilde{u}_M|_{\partial \Omega} = \tilde{g}(\tau,x), \quad (\tau,x)\in
(0,T] \times \partial \Omega_M.
\end{array}\right.
\end{eqnarray}
The variational formulation of above truncated PIDE (\ref{eq:truncated_call}) is to

\emph{Find $\tilde{u}_M \in L^2 \left( [0,T];H_{\eta}^1(\Omega_M)\right) \bigcap H^1 \left( [0,T]; \left(H_{\eta}^1(\Omega_M)\right)^* \right)$ such that for any $\tau \in [0,T]$,
\begin{eqnarray}\label{eq:truncated_variational_formulation}
\qquad \left\{
\begin{array}{rcll}
(\frac{\partial \tilde{u}_M}{\partial \tau},v)_{L_{\eta}^2(\Omega_M)}+a_M^{\eta}(\tilde{u}_M,v)&=&\mathcal{A}[\tilde{g}],& \forall ~ v \in H_{\eta}^1(\Omega_M)\\
\tilde{u}_M(0,x)&=&0,& \forall ~ x \in \Omega_M\\
\tilde{u}_M(\tau,x) &=& 0, & \forall ~ (\tau,x)\in [0,T] \times \partial \Omega_M.
\end{array}
\right.
\end{eqnarray}
}
where $a_M^{\eta}(\tilde{u}_M,v)$ is the restriction of $a^{\eta}(\tilde{u}_M,v)$ to $\Omega_M$, i.e.,
\[a_M^{\eta}(u,v) = \int_{\Omega_M}\left( (\nabla u)^t \kappa ~ \nabla v -\alpha(x)^t\nabla u ~v \right) e^{2\eta(x)}dx - \int_{\Omega_M}\J[u]~v e^{2\eta(x)} dx.
\]
For convenience we still denote by $\tilde{u}_M$ its extension by zero to all of $\Omega$. The restriction of approximating solution $\tilde{u}$ of
(\ref{eq:variational_formulation}) in $\Omega$ to $\Omega_M$
introduces a localization error $e_M=\tilde{u}_M-\tilde{u}$ which we now estimate. we have the following localization error estimation:
\begin{proposition} Suppose $\Omega_{M/2}\triangleq \{x \in \mathbb{R}^2 \big | \norm{x}\leq M/2\}$. Then there exists positive constant $C$ only depending on $T$ and $\gamma > 0$, which is independent of $M$ such that the localization error $e_M(\tau)$ satisfies:
\begin{eqnarray}\label{eq:localization_err_estimation_inequalty}
\norm{e_M(\tau,\cdot)}_{L^2(\Omega_{M/2})}^2+\int_0^{\tau}
\norm{e_M(s,\cdot)}_{H^1(\Omega_{M/2})}^2ds \leq C e^{-\gamma M}.
\end{eqnarray}
\end{proposition}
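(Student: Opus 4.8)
The plan is to exploit the fact that $e_M = \tilde u_M - \tilde u$ solves, in the weak sense, a homogeneous PIDE on $\Omega_M$ with zero initial data, but with a nonzero "boundary contribution" coming from the mismatch between $\tilde u$ and its zero extension at $\partial\Omega_M$, together with the nonlocal coupling of the jump operator $\J$ across the truncation boundary. The key structural observation is that the weighted setting of Proposition~\ref{prop:garding_inequality} already encodes exponential decay: the weight $e^{\eta(x)}$ with $\eta_i > 1$ forces $\tilde u$ (which behaves like the exponentially growing boundary datum $\tilde g$) and its derivatives to be controlled in $L^2_\eta(\Omega)$, and hence the unweighted $L^2$ and $H^1$ norms of $\tilde u$ on the annular region $\Omega \setminus \Omega_{M/2}$ are bounded by $e^{-\gamma M}$ times the $L^2_\eta$/$H^1_\eta$ norms over all of $\Omega$, for some $\gamma > 0$ depending on $\eta$. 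This is the mechanism that turns the a priori bound of Proposition~\ref{prop:estimationfortilde_u} into the claimed decay.

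First I would introduce a cutoff function $\chi_M \in C^\infty(\mathbb{R}^2)$ with $\chi_M \equiv 1$ on $\Omega_{M/2}$, $\chi_M \equiv 0$ outside $\Omega_{3M/4}$ (say), and $|\nabla \chi_M| \le C/M$, and test the equation for $e_M$ against $\chi_M^2 e_M e^{-2\beta s}$ (or rather a smooth truncation thereof, to stay inside the admissible test space). Expanding, the leading term gives $\tfrac12 \tfrac{d}{ds}\big(e^{-2\beta s}\|\chi_M e_M\|_{L^2}^2\big)$, the diffusion term contributes the coercive $\underline c \|\chi_M e_M\|_{H^1}^2$ after absorbing commutator terms of the form $\int \kappa \nabla e_M \cdot \nabla\chi_M \, \chi_M e_M$, which are $O(1/M)$ in the coefficient and can be bounded by $\varepsilon\|\chi_M e_M\|_{H^1}^2$ plus lower-order terms; the convection term is handled exactly as in Proposition~\ref{prop:garding_inequality}. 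The genuinely new terms are: (i) the jump term $\int \J[\chi_M e_M] \chi_M e_M$, versus the true $\int \J[e_M]\chi_M^2 e_M$ — their difference is an integral over $\Omega_M$ against a kernel that samples $e_M$ at translated points $x + y_i \e_i$, and the discrepancy is supported where $\chi_M(x) \ne \chi_M(x+y_i\e_i)$, which by the Gaussian decay of $k_i$ costs only an extra harmless constant; and (ii) the boundary source: because $e_M = \tilde u_M - \tilde u$ and $\tilde u_M$ vanishes (by extension) outside $\Omega_M$ while $\tilde u$ does not, the error equation carries a forcing term living on $\Omega \setminus \Omega_M$ whose $H^{-1}$-type norm is controlled by $\|\tilde u\|_{H^1(\Omega\setminus\Omega_M)}$ and, via the jump operator, by $\|\tilde u\|_{H^1(\Omega\setminus\Omega_{M/2})}$.

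Next I would estimate this forcing by the weighted decay argument: for any function $w$,
\[
\|w\|_{H^1(\Omega\setminus\Omega_{M/2})}^2 = \int_{\Omega\setminus\Omega_{M/2}} \big(|w|^2 + |\nabla w|^2\big) e^{2\eta(x)} e^{-2\eta(x)} \, dx \le e^{-2\min(\eta_1,\eta_2) M} \, \|w\|_{H^1_\eta(\Omega)}^2,
\]
since $-\eta(x) = \eta_1|x_1| + \eta_2|x_2| \ge \min(\eta_1,\eta_2)\,\|x\|_\infty$-type growth exceeds $(\text{const})\, M$ on that region (one checks $|x_1|+|x_2| \ge \|x\| \ge M/2$). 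Applying this to $w = \tilde u$ and invoking the a priori estimate of Proposition~\ref{prop:estimationfortilde_u} — whose right-hand side involves only $\|\partial_s \tilde g\|_{L^2_\eta}$ and $\|\tilde g\|_{H^1_\eta}$, finite and $M$-independent since $\tilde g \in H^1_\eta(\Omega)$ — bounds the forcing uniformly by $C e^{-\gamma M}$ with $\gamma = 2\min(\eta_1,\eta_2)$. Then a Grönwall argument applied to the differential inequality for $e^{-2\beta s}\|\chi_M e_M\|_{L^2}^2$ over $[0,\tau]$, using the zero initial data $e_M(0,\cdot) = 0$, yields
\[
\|\chi_M e_M(\tau)\|_{L^2(\Omega)}^2 + \underline c \int_0^\tau \|\chi_M e_M(s)\|_{H^1(\Omega)}^2 \, ds \le C\, e^{-\gamma M},
\]
and restricting to $\Omega_{M/2}$, where $\chi_M = 1$, gives the claim (after absorbing $e^{2\beta T}$ into $C$ and relabeling $\gamma$).

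The main obstacle I expect is the careful bookkeeping of the jump term under the cutoff: unlike a purely local (differential) operator, $\J$ is nonlocal, so $\chi_M$ does not commute with it, and one must show that the "leakage" of $\J[e_M]$ across the artificial boundary is itself exponentially small — this requires combining the Gaussian tail of $k_i$ (to handle large jumps $|y_i| \gtrsim M$) with the weighted decay of $e_M$ itself (to handle the region near $\partial\Omega_M$), and making sure the resulting constants do not secretly depend on $M$. A secondary technical point is that $\tilde u$ is not an admissible test function globally, so the truncation-and-limit argument producing the error equation must be justified within $L^2([0,T];H^1_\eta) \cap H^1([0,T];(H^1_\eta)^*)$, but this is routine given the regularity already established.
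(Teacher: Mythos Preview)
Your approach is essentially the paper's: a cutoff $\phi$ equal to $1$ on $\Omega_{M/2}$ and supported in $\Omega_M$, test the (unweighted) error equation against $\phi^2 e_M$, and estimate the commutator residual $\rho_M(\tau) = a(\phi e_M,\phi e_M) - a(e_M,\phi^2 e_M)$ by inserting $e^{\eta(x)}e^{-\eta(x)}$ on its support in $\Omega_M\setminus\Omega_{M/2}$ to extract the factor $e^{-\gamma M}$ against weighted norms. Two points where the paper differs slightly from your write-up: (a) since $\phi^2 e_M\in H_0^1(\Omega_M)$, the error equation is genuinely \emph{homogeneous} --- there is no separate ``boundary forcing'' term as in your item (ii); all the residual comes from the cutoff commutator, for the diffusion/convection part (supported where $\nabla\phi\neq 0$) and for the jump part (your item (i), which the paper decomposes explicitly into five pieces $\bar I_{11},\bar I_{12},\bar I_{21},\bar I_{22},I_3$). (b) Those commutator terms are quadratic in $e_M$, not linear in $\tilde u$, so to close you need the weighted a priori bound on $\tilde u_M$ as well as on $\tilde u$; the paper records this analogue of Proposition~\ref{prop:estimationfortilde_u} for $\tilde u_M$ first, and then the residual is bounded by $Ce^{-\gamma M}$ times products of $\|e_M\|_{L^2_\eta}$, $\|e_M\|_{H^1_\eta}$ and their unweighted counterparts, all of which are $M$-uniformly bounded.
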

\begin{proof}
The a priori estimate in Proposition (\ref{prop:estimationfortilde_u}) implies that $\norm{\tilde{u}}_{L_{\eta}^2(\Omega)}$ is bounded by some constant $C$ independent of $M$. Likewise,
\begin{eqnarray}\label{estimate:tildeu_M}
\nonumber e^{-2\beta
\tau}\norm{\tilde{u}_M}_{L_{\eta}^2(\Omega)}^2&+&\underline{c}
\int_0^{\tau}e^{-2\beta
s}\norm{\tilde{u}_M}_{H_{\eta}^1(\Omega)}^2ds\\
&\leq& \frac{1+\overline{c}}{\underline{c}}\int_{0}^{\tau}e^{-2\beta
s}\left(\norm{\frac{\partial \tilde{g}}{\partial
s}}_{L_{\eta}^2(\Omega)}^2+\overline{c}\norm{\tilde{g}}_{H_{\eta}^1(\Omega)}^2\right)ds.
\end{eqnarray}
Thus
$\norm{e_M}_{L_{\eta}^2(\Omega)},\norm{e_M}_{H_{\eta}^1(\Omega)}$
are bounded by $C$.

For any $\tau \in [0,T],v \in H_0^1(\Omega_M)$, the localization error $e_M=\tilde{u}_M(\tau,x)-\tilde{u}(\tau,x)$ satisfies:
\begin{eqnarray}\label{eq:localization_err_estimation}
\left(\frac{d }{d \tau}e_M(\tau),v \right)_{L^2(\mathbb{R}^2)}+ a \left(e_M(\tau),v \right)=0,
\end{eqnarray}
where $a(\cdot,\cdot)\triangleq a^0(\cdot,\cdot)=a^{\eta}(\cdot,\cdot)|_{\eta=0}$.

Define a cut-off function $\phi$ with the following properties: $\phi \in \mathcal{C}_{0}^{\infty}(\Omega_M),\phi=1$ on $\Omega_{M/2}$ and $\norm{\nabla \phi}_{L^{\infty}(\Omega_M)}\leq C$, where $C$ is independent of $M$. Inserting $v=\phi^2(x)e_M(\tau,x) \in
H_{\eta}^1(\Omega_M) $ in (\ref{eq:localization_err_estimation}), we
have
\begin{eqnarray}\label{eq:localization_error_estimation equation}
\frac{1}{2}\frac{d}{d \tau}\norm{\phi e_M(\tau)}_{L^2(\Omega_M)}^2+a\big(\phi e_M(\tau),\phi
e_M(\tau)\big)=\rho_M(\tau),
\end{eqnarray}
where the residual $\rho_M(\tau)=a_M\big(\phi e_M(\tau),\phi
e_M(\tau)\big)-a\big(e_M(\tau),\phi^2 e_M(\tau)\big).$  The residual $\rho_M(\tau)$ can be rewritten as follows
\begin{eqnarray}\label{eq:localization_error_estimation2}
\rho_M(\tau)&=&\sum_{i=1}^2\frac{\sigma_i^2}{2}\int_{\Omega}|\frac{\partial
\phi}{\partial x_i}|^2|e_M|^2dx_1dx_2 \\
\nonumber && -\sum_{i=1}^2 \alpha_i
\int_{\Omega}\frac{\partial \phi}{\partial
x_i}\phi|e_M|^2dx_1dx_2+\overline{\rho}_M(\tau),
\end{eqnarray}
where $\overline{\rho}_M(\tau)=-\int_{\Omega}\J[\phi
e_M]\phi e_M dx+ \int_{\Omega}\J[e_M]\phi^2 e_M dx.$

The weighted function $\eta(x)$ defined in (\ref{eta}) satisfies $\eta \in \L_{loc}^1(\mathbb{R})$, $\nabla \eta \in \L^{\infty}(\mathbb{R})$ and satisfies
\begin{eqnarray}
\Delta_{\theta}\eta=\eta(x+\theta y)-\eta(x)\leq \eta(y), \quad
\forall x,y \in \mathbb{R}^2,\norm{\theta}\leq 1,
\end{eqnarray}
The first two integral terms in the expression (\ref{eq:localization_error_estimation2}) can be estimated by
\begin{eqnarray}
\nonumber
\left|\sum_{i=1}^2\frac{\sigma_i^2}{2}\int_{\Omega}|\frac{\partial
\phi}{\partial x_i}|^2|e_M|^2dx_1dx_2 -\sum_{i=1}^2 \alpha_i
\int_{\Omega}\frac{\partial \phi}{\partial
x_i}\phi|e_M|^2dx_1dx_2\right|\\
\nonumber \leq C\int_{\Omega_M \backslash \Omega_{M/2}} |e_M|^2
e^{\eta(x)}e^{-\eta(x)}dx \leq C e^{-\gamma M}\norm{e_M}_{L_{\eta}^2(\Omega)}^2,
\end{eqnarray}
for positive constants $C$ and $\gamma$, where $\gamma$ independent of $M$.

Now $\overline{\rho}_M(\tau)$ is still left to estimate. Denote $K_i(z)$ the first anti-derivative of \levy measure $k_i$, i.e.,
\begin{eqnarray}
K_i(z)= \left \{
\begin{array}{c}
\int_z^{\infty} k_i(y)dy,\quad \mbox{if } z>0,\\[3mm]
-\int_{-\infty}^z k_i(y)dy,\quad \mbox{if } z<0,
\end{array}
\right.
\end{eqnarray}
then
\begin{eqnarray*}
\J[u]& = & \int_{\mathbb{R}}\left(\frac{\partial u}{\partial x_1}(x+z \e_1)-\frac{\partial u}{\partial x_1}(x)\right)K_1(z)dz\\
  && +\int_{\mathbb{R}}\left(\frac{\partial u}{\partial x_2}(x+z \e_2)-\frac{\partial u}{\partial x_2}(x)\right)K_2(z)dz + \chi \nabla u,
\end{eqnarray*}
where $\e_1=(1,0),\e_2=(0,1)$, $\chi=(\chi_1,\chi_2)$ and
\[\chi_i = \int_{\mathbb{R}}[y-(e^y-1)]k_i(y)dy.\]

We could decompose $\overline{\rho}_M(\tau)$ as follows
\[\overline{\rho}_M(\tau)=I_{11}+I_{12}+I_{21}+I_{22}+I_{3},\]
where
\begin{eqnarray*}
I_{11}&=&-\int_{\mathbb{R}}\int_{\Omega}\Big(\frac{\partial
e_M}{\partial
x_1}(\tau,x+z\e_1)\phi(x+z\e_1)+e_M(\tau,x+z\e_1)\frac{\partial
\phi}{\partial x_1}(x+z\e_1)\\
&&\qquad \qquad -\frac{\partial e_M}{\partial
x_1}(\tau,x)\phi(x)-e_M(\tau,x)\frac{\partial
\phi}{\partial x_1}(x)
\Big)\phi e_M K_1(z)dxdz,\\
I_{12}&=&-\int_{\mathbb{R}}\int_{\Omega}\Big(\frac{\partial
e_M}{\partial
x_2}(\tau,x+z\e_2)\phi(x+z\e_2)+e_M(\tau,x+z\e_2)\frac{\partial
\phi}{\partial x_2}(x+z\e_1)\\
&&\qquad \qquad-\frac{\partial e_M}{\partial
x_2}(\tau,x)\phi(x)-e_M(\tau,x)\frac{\partial
\phi}{\partial x_2}(x)
\Big)\phi e_M K_2(z)dxdz,\\
I_{21}&=&\int_{\mathbb{R}}\int_{\Omega}\big(\frac{\partial e_M}{\partial x_1}(\tau,x+z\e_1)-\frac{\partial e_M}{\partial x_1}(\tau,x) \big)\phi^2 e_M K_1(z)dxdz,\\
I_{22}&=&\int_{\mathbb{R}}\int_{\Omega}\big(\frac{\partial e_M}{\partial x_2}(\tau,x+z\e_2)-\frac{\partial e_M}{\partial x_2}(\tau,x) \big)\phi^2 e_M K_2(z)dxdz,\\
I_3 &=& -\int_{\Omega} \chi \nabla (\phi e_M) \phi e_M dx+\int_{\Omega} \chi \nabla ( e_M) \phi^2 e_M dx.
\end{eqnarray*}
Before we are going to estimate $\overline{\rho}_M(\tau)$ respectively, we need to rearrange $I_{11},I_{12},I_{21},I_{22}$ as follows
\[\overline{\rho}_M(\tau)=\overline{I}_{11}+\overline{I}_{12}+\overline{I}_{21}+\overline{I}_{22}+I_3,\]
where
\begin{eqnarray*}
\overline{I}_{11}&=&-\int_{\mathbb{R}}\int_{\Omega}\frac{\partial
e_M}{\partial
x_1}(\tau,x+z\e_1)(\phi(x+z\e_1)-\phi(x))e_M \phi K_1(z)dxdz,\\
\overline{I}_{12}&=& - \int_{\mathbb{R}}\int_{\Omega}\frac{\partial
e_M}{\partial x_2}(\tau,x+z\e_2)(\phi(x+z\e_2) - \phi(x))e_M \phi K_2(z)dxdz,\\
\overline{I}_{21}&=&-\int_{\mathbb{R}}\int_{\Omega}\Big(e_M(\tau,x+z\e_1) - e_M(\tau,x)\Big)\frac{\partial \phi}{\partial x_1}(x)\phi e_M K_1(z)dxdz \\
&&-\int_{\mathbb{R}}\int_{\Omega} e_M(\tau,x+z\e_1)\Big(\frac{\partial \phi}{\partial x_1}(x+z\e_1) - \frac{\partial \phi}{\partial x_1}(x)\Big)\phi e_M K_1(z) dxdz,\\
\overline{I}_{22}&=&-\int_{\mathbb{R}}\int_{\Omega}\Big(e_M(\tau,x+z\e_2) - e_M(\tau,x)\Big)\frac{\partial \phi}{\partial x_2}(x)\phi e_M K_2(z)dxdz\\
&&-\int_{\mathbb{R}}\int_{\Omega} e_M(\tau,x+z\e_2)\Big(\frac{\partial \phi}{\partial x_2}(x+z\e_2)-\frac{\partial \phi}{\partial x_2}(x)\Big) \phi e_M K_2(z)dxdz,\\
I_3 &=& -\int_{\Omega}  e_M^2 \phi \cdot \chi \nabla \phi  dx.
\end{eqnarray*}

We denote $D_1=\{x \in \Omega|\sqrt{(x_1+z)^2+x_2^2}\geq M/2\}$ and $D_2=\{x \in \Omega|\sqrt{x_1^2+x_2^2}\geq M/2\}$ and observe that the integrand in $I_3$ is zero if $x \notin D_2$. Thus
\begin{eqnarray}\label{err-estimate_I3}
\nonumber|I_3|    &=& \left |-\int_{D_2}  e_M^2 \phi \cdot \chi \nabla \phi dx \right |\\
\nonumber &=&\int_{D_2}  |e_M e^{\eta(x)}|^2 \cdot | \phi| \cdot | \chi \nabla \phi| \cdot e^{-2\eta(x)}dx\\
&\leq & C e^{-\gamma M} \norm{e_M }_{L_{\eta}^2(\Omega_M)}.
\end{eqnarray}
Noting that the integrand in $\overline{I}_{11}$ is zero if $x
\notin D_1 \bigcup D_2$ we have
\begin{eqnarray*}
\overline{I}_{11}=-\int_{\mathbb{R}}\int_{D_1 \bigcup
D_2}\frac{\partial e_M}{\partial
x_1}(\tau,x+z \e_1)(\phi(x+z \e_1)-\phi(x))e_M \phi
K_1(z)dxdz.
\end{eqnarray*}
It implies that
\begin{eqnarray}\label{err-estimate_I11}
\nonumber|\overline{I}_{11}|&\leq& \int_{\mathbb{R}}|z|K_1(z)dz\int_{D_1}\left|\frac{\partial e_M}{\partial x_1}(\tau,x+z \e_1) e^{\eta(x+z \e_1)}\right|\cdot|e_M| \cdot e^{-\eta(x+z \e_1)} \phi dx\\
&& + \int_{\mathbb{R}}|z|K_1(z)dz\int_{D_2}\left |\frac{\partial e_M}{\partial x_1}(\tau,x+z \e_1)\right| \cdot \left|e_M e^{\eta(x)}\right|\cdot e^{-\eta(x)}dx\\
\nonumber &\leq& C e^{-\gamma M} \left(\norm{\frac{\partial e_M}{\partial
x_1}e^{\eta(x)}}_{L^2(\Omega)}\cdot \norm{e_M}_{L^2(\Omega)}+\norm{\frac{\partial e_M}{\partial x_1}}_{L^2(\Omega)} \cdot \norm{e_M e^{\eta(x)}}_{L^2(\Omega)}\right).
\end{eqnarray}
Similarly, the above inequality holds true for $\overline{I}_{12},$
i.e.,
\begin{eqnarray} \label{err-estimate_I12}
\qquad \qquad |\overline{I}_{12}| \leq C e^{-\gamma M} \left(\norm{\frac{\partial e_M}{\partial x_2}}_{L_{\eta}^2(\Omega)}\cdot \norm{e_M}_{L^2(\Omega)}+\norm{\frac{\partial e_M}{\partial x_2}}_{L^2(\Omega)} \cdot \norm{e_M}_{L_{\eta}^2(\Omega)}\right).
\end{eqnarray}
Since two integrand functions in $\overline{I}_{21}$ are zero if $x \in \Omega_{M/2}$, $\overline{I}_{21}$ can be estimated as follows
\begin{eqnarray} \label{err-estimate_I21}
\nonumber |\overline{I}_{21}| \leq \int_{\mathbb{R}}K_1(z)dz\int_{D_2} \left|e_M(\tau,x+z \e_1)-e_M(\tau,x)\right|\cdot\left|\frac{\partial \phi}{\partial x_1}\phi\right| \cdot\left|e_Me^{\eta(x)}\right| \cdot e^{-\eta(x)} dx\\
\nonumber +\int_{\mathbb{R}}K_1(z)dz\int_{D_2}| e_M(\tau,x+ze_1)|\left|\left(\frac{\partial \phi}{\partial x_1}(x+z \e_1) - \frac{\partial \phi}{\partial x_1}(x)\right)\phi\right| \left|e_M e^{\eta(x)}\right|\cdot e^{-\eta(x)} dx,\\
\leq C e^{-\gamma M}\left(\norm{e_M}_{H^1(\Omega_M)} \cdot \norm{e_M}_{L_{\eta}^2(\Omega_M)}+ \norm{e_M}_{L^2(\Omega_M)}\cdot \norm{e_M }_{L_{\eta}^2(\Omega_M)} \right).
\end{eqnarray}
Similarly,
\begin{eqnarray}\label{err-estimate_I22}
\qquad \qquad |\overline{I}_{22}| &\leq& C e^{-\gamma M}\left(\norm{e_M}_{H^1(\Omega_M)}
\cdot \norm{e_M}_{L_{\eta}^2(\Omega_M)}+ \norm{e_M}_{L^2(\Omega_M)}\cdot \norm{e_M}_{L_{\eta}^2(\Omega_M)} \right).
\end{eqnarray}
Integrating (\ref{eq:localization_error_estimation equation}) from $0$ to $\tau$ and combining it with the estimations: (\ref{err-estimate_I3}), (\ref{err-estimate_I11}), (\ref{err-estimate_I12}), (\ref{err-estimate_I21}), (\ref{err-estimate_I22}) and using the a priori estimate (\ref{eq:priori-estimation}), (\ref{estimate:tildeu_M}), we obtain (\ref{eq:localization_err_estimation_inequalty}).
\end{proof}

\subsection{Error estimate for the time-discretization scheme}
We introduce a partition of time interval $[0,T]$ into subintervals $[\tau_{n-1}, \tau_n]$, $ n=1,2,\cdots, N$, such that $0=\tau_0<\tau_1<\cdots<\tau_N=T$. We define the length between $\tau_{n+1}$ and $\tau_n$ as $\Delta \tau \triangleq \tau_{n+1}-\tau_{n}$ and consider the pricing equation (\ref{eq:PIDE4SmoothedPayoff}) as follows
\begin{equation}\label{OperatorForm4SemiError}
u_{\tau}=\L[u] = \D[u]+\J[u]
\end{equation}
in a bounded domain (say, $\Omega_M$) satisfying homogeneous boundary conditions. Any error estimate obtained for a discretization of this equation can easily be applied to that of the localized problem (\ref{eq:truncated_call}) through a variable transformation $\bar{u}=u(\tau,x)-\tilde{g}(\tau,x)$.
For simplicity $\Delta \tau$ is assumed to be constant. Let $u^{n+1}$ be the solution of the following system resulting from Crank-Nicolson scheme
\begin{eqnarray}\label{eq:CNPIDE4NonsmoothPayoff}
\frac{u^{n+1}-u^n}{\triangle \tau}=\L \left[ \frac{u^{n+1}+u^n}{2}\right].
\end{eqnarray}
Defining the error function $e^n=u^n-u(\tau_n)$ and subtracting equation (\ref{eq:CNPIDE4NonsmoothPayoff}) from equation (\ref{OperatorForm4SemiError}) at $\tau_{n+\frac{1}{2}}$, we have the following error equation,
\begin{eqnarray}\label{timediscretization}
\frac{e^{n+1}-e^n}{\triangle \tau} -\L\left[ \frac{e^{n+1}+e^n}{2}\right] =-r_1^{n+\frac{1}{2}} + \L \left[ r_2^{n+\frac{1}{2}} \right],
\end{eqnarray}
where \begin{eqnarray*} \label{remainderterm}
r_1^{n+\frac{1}{2}} &=& \frac{u(\tau_{n+1})-u(\tau_n)}{\triangle \tau} - u_{\tau}(\tau_{n+\frac{1}{2}}),\\
r_2^{n+\frac{1}{2}} &=&\frac{u(\tau_{n+1})+u(\tau_n)}{2}-u(\tau_{n+\frac{1}{2}}).
\end{eqnarray*}
 We can estimate $r_1^{n+\frac{1}{2}}$ and $r_2^{n+\frac{1}{2}}$ via the Taylor expansion at $\tau_{n+\frac12}$ and its integral-form remainder:
\begin{eqnarray*}
 r_1^{n+\frac{1}{2}}
&=&\frac{1}{2 \triangle \tau} \left( \int_{\tau_{n}}^{\tau_{n+\frac{1}{2}}} u_{\tau\tau\tau}(\xi) (\xi-\tau_{n}) ^2 d \xi +\int_{\tau_{n+\frac{1}{2}}}^{\tau_{n+1}} u_{\tau\tau\tau}(\xi) (\tau_{n+1}-\xi) ^2 d \xi \right)\\
 r_2^{n+\frac{1}{2}} &=& \frac{1}{2} \left( \int_{\tau_{n}}^{\tau_{n+\frac{1}{2}}} u_{\tau\tau}(\xi) (\xi-\tau_{n}) d \xi +\int_{\tau_{n+\frac{1}{2}}}^{\tau_{n+1}} u_{\tau\tau}(\xi) (\tau_{n+1}-\xi) d \xi \right).
\end{eqnarray*}
Then  the Cauchy-Schwarz inequality gives  estimates of $r_1^{n+\frac{1}{2}}$ and $r_2^{n+\frac{1}{2}}$ as follows:
\begin{eqnarray*}
\norm{r_1^{n+\frac{1}{2}}}_{L^2(\Omega_M)}^2 &\leq& C (\triangle \tau)^3 \int_{\tau_n}^{\tau_{n+1}}\norm{u_{\tau\tau\tau}(\xi)}_{L^2(\Omega_M)}^2d\xi,\\
\norm{r_2^{n+\frac{1}{2}}}_{H^1(\Omega_M)}^2 &\leq& C (\triangle \tau)^3 \int_{\tau_n}^{\tau_{n+1}}\norm{u_{\tau\tau}(\xi)}_{H^1(\Omega_M)}^2 d\xi.
\end{eqnarray*}
Then we have
\begin{eqnarray}
\triangle \tau \sum_n \norm{r_1^{n+\frac{1}{2}}}_{L^2(\Omega_M)}^2 \leq C (\triangle \tau)^4 \int_{0}^{T} \norm{u_{\tau\tau\tau}(\xi )}_{L^2(\Omega_M)}^2 d\xi,\\
\triangle \tau \sum_n \norm{r_2^{n+\frac{1}{2}}}_{H^1(\Omega_M)}^2 \leq C (\triangle \tau)^4 \int_{0}^{T} \norm{u_{\tau\tau}(\xi )}_{H^1(\Omega_M)}^2  d\xi.
\end{eqnarray}

The variational formulation for error equation (\ref{timediscretization}) is given as follows:
find a series of $u^n\in H^1(\Omega_M),n=0,1,\ldots,N,$ such that $u^0=\tilde{h}(x)$ and for all $n=1,2,\ldots,N,$, for all v $\in H_0^1(\Omega_M)$,
\begin{eqnarray}\label{eq:VF4CN}
 \quad \left(\frac{e^{n+1}-e^n}{\triangle \tau},v\right)+a\left(\frac{e^{n+1}+e^n}{2},v\right)= -\left(r_1^{n+\frac{1}{2}},v\right) - a\left(r_2^{n+\frac{1}{2}},v\right),
\end{eqnarray}
where
\[a(u,v)=\int_{\Omega_M}\left( (\nabla u)^t \kappa ~ \nabla v -\alpha^t\nabla u ~v \right)dx-\int_{\Omega_M}\J[u]~v  dx.\]
Defining $\bar{e}^n=\frac{e^{n+1}+e^n}{2}$ and substituting $v$ with $e^{-2\beta \tau_{n+\frac{1}{2}}}\bar{e}^n$ in variational form (\ref{eq:VF4CN}), we have (seeing \garding inequality in Section 3.1)
\[a \left(\frac{e^{n+1}+e^n}{2},e^{-2\beta \tau_{n+\frac{1}{2}}}\bar{e}^n\right) \geq e^{-2 \beta \tau_{n+\frac{1}{2}}}\left(\underline{c} \norm{\bar{e}^n}_{H^1(\Omega_M)}^2 - \beta \norm{\bar{e}^n}_{L^2(\Omega_M)}^2 \right) \]
and
\begin{eqnarray*}
\left(\frac{e^{n+1}-e^n}{\triangle \tau},e^{-2\beta \tau_{n+\frac{1}{2}}}\bar{e}^n\right)
=\frac{1}{2\triangle \tau} \left(e^{-2\beta \tau_{n+1}}\norm{e^{n+1}}_{L^2}^2 -e^{-2\beta \tau_{n}}\norm{e^{n}}_{L^2}^2\right)+J,
\end{eqnarray*}
where (using $1-e^{-\beta \triangle \tau}\geq \beta \triangle \tau - \frac12 \beta^2 (\triangle \tau)^2$ and $e^{\beta \triangle \tau} -1 \geq \beta \triangle \tau + \frac12 \beta^2 (\triangle \tau)^2$)
\begin{eqnarray*}
J&=&\frac{1}{2\triangle \tau}(e^{n+1},e^{n+1}e^{-2\beta \tau_{n+\frac{1}{2}}}) (1-e^{- \beta \triangle \tau})
-\frac{1}{2\triangle \tau}(e^{n},e^{n}e^{-2\beta \tau_{n+\frac{1}{2}}}) (1-e^{\beta \triangle \tau})\\
&=&\frac{1}{2\triangle \tau}e^{-2\beta \tau_{n+\frac{1}{2}}}\left[(e^{n+1},e^{n+1}) (1-e^{- \beta \triangle \tau})+ (e^{n},e^{n}) (e^{\beta \triangle \tau}-1)\right]\\
&\geq &\beta e^{-2\beta \tau_{n+\frac{1}{2}}} \frac{\norm{e^{n+1}}_{L^2}^2 +\norm{e^{n}}_{L^2}^2}{2} +\frac{1}{4}\beta^2\triangle \tau \left(\norm{e^{n}}_{L^2}^2 - \norm{e^{n+1}}_{L^2}^2\right)
\\
&\geq & \beta e^{-2\beta \tau_{n+\frac{1}{2}}} \norm{\bar{e}^n}_{L^2}^2+\frac{1}{4}\beta^2\triangle \tau \left(\norm{e^{n}}_{L^2}^2 - \norm{e^{n+1}}_{L^2}^2\right).
\end{eqnarray*}
%
Thus
\begin{eqnarray*}
&&\left(\frac{e^{n+1}-e^n}{\triangle \tau},e^{-2\beta \tau_{n+\frac{1}{2}}}\bar{e}^n\right)+a\left(\frac{e^{n+1}+e^n}{2},e^{-2\beta \tau_{n+\frac{1}{2}}}\bar{e}^n\right)\\
&\geq& \frac{1}{2\triangle \tau}e^{-2\beta \tau_{n+1}}\norm{e^{n+1}}_{L^2}^2 -\frac{1}{2\triangle \tau}e^{-2\beta \tau_{n}}\norm{e^{n}}_{L^2}^2 \\
&&+ \underline{c}  e^{-2 \beta \tau_{n+\frac{1}{2}}}\norm{\bar{e}^n}_{H^1}^2+\frac{1}{4}\beta^2\triangle \tau \left(\norm{e^{n}}_{L^2}^2 - \norm{e^{n+1}}_{L^2}^2\right).
\end{eqnarray*}
We could also estimate for the right hand side of (\ref{eq:VF4CN}) by Young's inequality.
\begin{eqnarray*}
&&-\left(r_1^{n+\frac{1}{2}},e^{-2\beta \tau_{n+\frac{1}{2}}}\bar{e}^n\right) - a\left(r_2^{n+\frac{1}{2}},e^{-2\beta \tau_{n+\frac{1}{2}}}\bar{e}^n\right) \\
&\leq&  e^{-2\beta \tau_{n+\frac{1}{2}}}\norm{r_1^{n+\frac{1}{2}}}_{L^2} \norm{\bar{e}^n}_{L^2}  +\overline{c} e^{-2\beta \tau_{n+\frac{1}{2}}}\norm{r_2^{n+\frac{1}{2}}}_{H^1} \norm{\bar{e}^n}_{H^1}\\
&\leq& e^{-2\beta \tau_{n+\frac{1}{2}}}\left(\frac{1}{4\varepsilon} \norm{r_1^{n+\frac{1}{2}}}_{L^2}^2 + \varepsilon \norm{\bar{e}^n}_{L^2}^2\right) +\overline{c} e^{-2\beta \tau_{n+\frac{1}{2}}} \left(\frac{1}{4\varepsilon} \norm{r_2^{n+\frac{1}{2}}}_{H^1}^2 + \varepsilon \norm{\bar{e}^n}_{H^1}^2\right) \\
&=& \frac{1}{4\varepsilon}  e^{-2\beta \tau_{n+\frac{1}{2}}} \left(\norm{r_1^{n+\frac{1}{2}}}_{L^2}^2+\overline{c}\norm{r_2^{n+\frac{1}{2}}}_{H^1}^2\right) + \varepsilon (1+\overline{c}) e^{-2\beta \tau_{n+\frac{1}{2}}}\norm{\bar{e}^n}_{H^1}^2.
\end{eqnarray*}
Therefore the following inequality is obtained
\begin{eqnarray*}
&&e^{-2\beta \tau_{n+1}}\norm{e^{n+1}}_{L^2}^2 - e^{-2\beta \tau_{n}}\norm{e^{n}}_{L^2}^2 \\
&&+\frac{1}{4}\beta^2 (\triangle \tau)^2 \left(\norm{e^{n}}_{L^2}^2 - \norm{e^{n+1}}_{L^2}^2\right) + C_1 \triangle \tau e^{-2 \beta \tau_{n+\frac{1}{2}}}\norm{\bar{e}^n}_{H^1}^2 \\
&\leq&  \frac{1}{2\varepsilon}  e^{-2\beta \tau_{n+\frac{1}{2}}} \left(\triangle \tau\norm{r_1^{n+\frac{1}{2}}}_{L^2}^2+\overline{c} \triangle \tau \norm{r_2^{n+\frac{1}{2}}}_{H^1}^2\right) \\
&\leq& \frac{1}{2\varepsilon}  \left(\triangle \tau \norm{r_1^{n+\frac{1}{2}}}_{L^2}^2+\overline{c} \triangle \tau
\norm{r_2^{n+\frac{1}{2}}}_{H^1}^2 \right) ,
\end{eqnarray*}
where $C_1=2 \left(\underline{c} - \varepsilon (1+\overline{c})\right)$. Here we could choose $\varepsilon<\frac{\underline{c}}{1 +\overline{c}}$ to make $C_1>0$.

Summing the above inequality from $n=0$ to $n=m$ and combining estimates of $r_1^{n+\frac{1}{2}}$ and $r_2^{n+\frac{1}{2}}$, we have
\begin{eqnarray*}\label{time_error}
&&e^{-2\beta \tau_{m+1}}\norm{e^{m+1}}_{L^2}^2-e^{-2\beta \tau_{0}}\norm{e^{0}}_{L^2}^2 + \frac{1}{2}\beta^2\triangle \tau^2 \left(\norm{e^{0}}_{L^2}^2 - \norm{e^{m+1}}_{L^2}^2\right) \\
&&+ C_1 \sum_{n=0}^{m} e^{-2 \beta \tau_{n+\frac{1}{2}}}\norm{\frac{e^{n+1}+e^n}{2}}_{H^1}^2 \\
&\leq& \frac{1}{2\varepsilon}  \left(\triangle \tau \sum_{n=0}^{m}\norm{r_1^{n+\frac{1}{2}}}_{L^2}^2+\overline{c} \triangle \tau \sum_{n=0}^{m}\norm{r_2^{n+\frac{1}{2}}}_{H^1}^2\right)  \\
&\leq& C (\triangle\tau)^4 \int_0^T (\norm{u_{\tau\tau\tau}(\xi)}_{L^2(\Omega_M)}^2 + \norm{u_{\tau\tau}(\xi)}_{H^1(\Omega_M)}^2) d\xi.
\end{eqnarray*}
Here we note that the initial error $e^0=0$. The above estimate can be simplified as follows. If $\triangle \tau$ is sufficiently small so that $e^{-2\beta T} - \frac12 \beta^2 (\triangle \tau)^2 > 0$ then we have,
for all $m=0,1, \ldots, [T/\triangle \tau]$:
\begin{eqnarray}\label{time-error-estimate}
\nonumber && \norm{e^{m+1}}_{L^2(\Omega_M)} + \sum_{n=1}^{m+1} \norm{e^n}_{H^1(\Omega_M)}\\
& &\qquad \qquad \leq C (\triangle \tau)^2\left(\int_0^T (\norm{u_{\tau\tau\tau}(\xi)}_{L^2(\Omega_M)}^2 + \norm{u_{\tau\tau}(\xi)}_{H^1(\Omega_M)}^2) d\xi \right)^{\frac12},
\end{eqnarray}
where $C$ is a generic constant independent of $\triangle \tau$ but dependent on $T$ and the size of the domain $\Omega_M$. This gives an error estimate of the Crank-Nicolson scheme in any finite time interval.

\subsection{Error analysis to the finite element method}
 For the finite element spatial discretization, we introduce a triangular partition $\mathcal{T}_h$ of $\Omega_M$. We denote by $\mathcal{T}_h$ the set of all non-overlapping triangles $K$ of the partition and by $d(K)$ the longest side of $K$. Then
 denote $h=max_{K\in \mathcal{T}_h}d(K).$ We then construct a continuous piecewise polynomial finite element space $V_h$, i.e.,
\[V_h=\{v\in \mathbb{C}(\Omega_M): v|_K \in \mathcal{P}_k(K),\forall K\in \mathcal{T}_h,v|_{\partial \Omega_M}=0\},\]
where $\mathcal{P}_k(K)$ is the set of polynomial functions defined on $K$ with its highest order $k$. Based on the approximation property of \cite{Thomee:2006} for projection operator in FEM, we could have the similar approximation properties hold: for $u\in H_{\eta}^{k+1}(\Omega_{M})$ there is a projection $\P$ such that $\P u\in V_h$ and
\begin{eqnarray}\label{ProjectionApproximation}
\norm{u-\P u}_{H_{\eta}^s}\leq C h^{k+1-s} \norm{u}_{H_{\eta}^{k+1}}.
\end{eqnarray}
Note here that our domain is localized to a bounded one and weighted Sobolev spaces are not really necessarily. Nevertheless we just state results in terms of weighted Sobolev spaces where usual Sobolev spaces are special cases.
Applying $v \in V_h$ in the variational form (\ref{eq:truncated_variational_formulation}) in the bounded domain $\Omega_M$, we obtain the finite element solution $u_h$ satisfying:
\begin{eqnarray}\label{eq:truncated_variational_formulation_Vh} \qquad \left\{
\begin{array}{rcll}
(\frac{\partial u_h}{\partial \tau},v)_{L_{\eta}^2(\Omega_M)}+a_M^{\eta}(u_h,v)&=&\mathcal{A}[\tilde{g}],& \forall ~ v \in V_h\\
u_h(0,x)&=&0,& \forall ~ x \in \Omega_M\\
u_h(\tau,x) &=& 0, & \forall ~ (\tau,x)\in [0,T] \times \partial \Omega_M.
\end{array}
\right.
\end{eqnarray}
Now we want to estimate the error between the solution $u$ of (\ref{eq:truncated_variational_formulation}) and finite element solution $u_h$.
\begin{proposition} Let $e_h=u-u_h$ and $u$ have enough regularity, i.e. $u \in H_{\eta}^{2}(\Omega_M)$. Then for piecewise
 linear finite elements in $V_h$ we have the following error estimate:
\[\norm{e_h}_{L_{\eta}^2(\Omega_M)} +\left(\int_0^{\tau} \norm{e_h}_{H_{\eta}^1(\Omega_M)}^2d\xi \right)^{\frac12}  \leq C h
\left(\int_0^{\tau} \norm{u}_{H_{\eta}^2(\Omega_M)}^2 d\xi \right)^{\frac12}, \]
where $C$ is a generic constant depending on a given time $\tau$ and the size of $\Omega_M$ but not on $h$.
\end{proposition}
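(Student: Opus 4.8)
The plan is to run the classical energy-type Galerkin argument for parabolic problems, based on the ``$\rho$--$\theta$'' splitting of the error (cf. Wheeler/Thom\'ee), the only departures from the textbook situation being that the bilinear form $a_M^{\eta}(\cdot,\cdot)$ is nonsymmetric and carries the nonlocal operator $\J$; however both features are already absorbed into the continuity bound and the \garding inequality supplied by Proposition~\ref{prop:garding_inequality}, so no new estimate for the jump part has to be produced here. Throughout, ``enough regularity'' will be understood as $u\in L^2(0,\tau;H_{\eta}^{2}(\Omega_M))$ together with $u_\tau\in L^2(0,\tau;H_{\eta}^{2}(\Omega_M))$ (or, alternatively, one fixes $\P$ to be the $L_{\eta}^2$-orthogonal projection so that the time-derivative term below drops out entirely).

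\emph{Step 1 (splitting and Galerkin orthogonality).} Let $\P u\in V_h$ be the projection of (\ref{ProjectionApproximation}) and write
\[
e_h=u-u_h=\big(u-\P u\big)+\big(\P u-u_h\big)=:\rho+\theta,\qquad \theta\in V_h .
\]
Both $u$, the solution of (\ref{eq:truncated_variational_formulation}), and $u_h$, the solution of (\ref{eq:truncated_variational_formulation_Vh}), satisfy the same variational identity with the same right-hand side $\mathcal{A}[\tilde g]$ for every $v\in V_h$; subtracting yields the Galerkin orthogonality, and inserting the splitting gives, for all $v\in V_h$,
\[
\Big(\frac{\partial\theta}{\partial\tau},v\Big)_{L_{\eta}^2(\Omega_M)}+a_M^{\eta}(\theta,v)=-\Big(\frac{\partial\rho}{\partial\tau},v\Big)_{L_{\eta}^2(\Omega_M)}-a_M^{\eta}(\rho,v).
\]

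\emph{Step 2 (energy estimate for $\theta$ and conclusion).} I would test with $v=\theta$ — or, to reproduce exactly the $e^{-2\beta s}$ bookkeeping of the proofs of Proposition~\ref{prop:estimationfortilde_u} and of Section~3.3, with $v=\theta e^{-2\beta\tau}$ — using Proposition~\ref{prop:garding_inequality}(2) to bound $a_M^{\eta}(\theta,\theta)$ from below, Proposition~\ref{prop:garding_inequality}(1) with Cauchy--Schwarz for the two right-hand terms, and Young's inequality to move a term $\tfrac{\underline c}{2}\norm{\theta}_{H_{\eta}^1(\Omega_M)}^2$ to the left. This produces a differential inequality of the form
\[
\frac{d}{d\tau}\norm{\theta}_{L_{\eta}^2(\Omega_M)}^2+\underline c\,\norm{\theta}_{H_{\eta}^1(\Omega_M)}^2\le C\,\norm{\theta}_{L_{\eta}^2(\Omega_M)}^2+C\Big(\big\Vert\tfrac{\partial\rho}{\partial\tau}\big\Vert_{L_{\eta}^2(\Omega_M)}^2+\norm{\rho}_{H_{\eta}^1(\Omega_M)}^2\Big).
\]
Since $\theta(0,\cdot)=\P u(0,\cdot)-u_h(0,\cdot)=0$, integrating in time and applying \gronwall inequality gives
\[
\norm{\theta(\tau,\cdot)}_{L_{\eta}^2(\Omega_M)}^2+\int_0^{\tau}\norm{\theta(s,\cdot)}_{H_{\eta}^1(\Omega_M)}^2\,ds\le Ce^{C\tau}\int_0^{\tau}\Big(\big\Vert\tfrac{\partial\rho}{\partial s}\big\Vert_{L_{\eta}^2(\Omega_M)}^2+\norm{\rho}_{H_{\eta}^1(\Omega_M)}^2\Big)ds .
\]
Finally I invoke (\ref{ProjectionApproximation}) with $k=1$: $\norm{\rho}_{H_{\eta}^1(\Omega_M)}\le Ch\,\norm{u}_{H_{\eta}^2(\Omega_M)}$, $\norm{\rho}_{L_{\eta}^2(\Omega_M)}\le Ch^2\norm{u}_{H_{\eta}^2(\Omega_M)}$, and likewise $\Vert\rho_\tau\Vert_{L_{\eta}^2(\Omega_M)}\le Ch^2\norm{u_\tau}_{H_{\eta}^2(\Omega_M)}$ (this term being absent if $\P$ is the $L_{\eta}^2$-projection); combining these with the triangle inequality $\norm{e_h}\le\norm{\rho}+\norm{\theta}$ in both the $L_{\eta}^2(\Omega_M)$ and the $L^2(0,\tau;H_{\eta}^1(\Omega_M))$ norms yields the claimed estimate, with $C$ depending on $\tau$, on the size of $\Omega_M$, and on $\underline c,\overline c,\beta$ but not on $h$.

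The genuinely substantive point is the presence of the nonlocal term inside $a_M^{\eta}$: its contributions must be controlled by $\norm{\nabla\cdot}_{L_{\eta}^2}\norm{\cdot}_{L_{\eta}^2}$, which is exactly what Proposition~\ref{prop:garding_inequality} delivers once the Gaussian \levy densities $k_i$ make the constants $c_i=\int_{\mathbb R}\big(e^{\eta_i|y|}|y|+|e^y-1|\big)k_i(y)\,dy$ finite. With that in hand, everything else is the standard parabolic Galerkin machinery; the one bookkeeping nuisance is the time-derivative term $(\partial_\tau\rho,\theta)_{L_{\eta}^2(\Omega_M)}$, handled either by using the $L_{\eta}^2$-projection or by the extra regularity of $u_\tau$ as noted above.
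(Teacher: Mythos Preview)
Your proposal is correct and follows essentially the same route as the paper: the same $\rho$--$\theta$ splitting (the paper writes $\theta=u-\P u$, $\phi=\P u-u_h$), Galerkin orthogonality, testing with the discrete part, \garding inequality plus continuity, Young's inequality, the $e^{-2\beta\tau}$ integrating factor, and the approximation property (\ref{ProjectionApproximation}). The only cosmetic difference is that the paper fixes $\P$ so that the projection error is orthogonal to $V_h$ and thereby drops the $(\partial_\tau\rho,\phi)_{L_\eta^2}$ term outright, which is precisely the option you flag at the outset.
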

\begin{proof} Let $e_h = \theta + \phi$, where $\theta = u-\P u$, $\phi=\P u - u_h$ and the operator $\P$ is a projection from $H_{\eta}^1(\Omega_M)$ to $V_h$. Subtracting equation (\ref{eq:truncated_variational_formulation}) in \ref{ProjectionApproximation} from (\ref{eq:truncated_variational_formulation_Vh}), we have the error equation
\[\left(\frac{d e_h}{d \tau},v\right)_{L_{\eta}^2(\Omega_M)}+a_M^{\eta}(e_h,v)=0, \forall ~ v \in V_h\]
Let $v=\phi$. Since $(\theta,v)_{H_{\eta}^1(\Omega_M)}=0$ for all $v\in V_h$, we have
\begin{eqnarray*}
\frac{1}{2}\frac{d }{d \tau}\norm{\phi}_{L_{\eta}^2(\Omega_M)}^2+a_M^{\eta}(\phi,\phi) &=& - a_M^{\eta}(\theta,\phi).
\end{eqnarray*}
By \garding inequality (\ref{prop:garding_inequality}), we obtain
\begin{eqnarray*}
\frac{1}{2}\frac{d }{d \tau}\norm{\phi}_{L_{\eta}^2(\Omega_M)}^2 +  \underline{c} \norm{\phi}_{H_{\eta}^1(\Omega_M)}^2 - \beta \norm{\phi}_{L_{\eta}^2(\Omega_M)}^2  &\leq& \overline{c} \norm{\theta}_{H_{\eta}^1(\Omega_M)} \norm{\phi}_{H_{\eta}^1(\Omega_M)} \\
&\leq& \frac{\overline{c}}{4\varepsilon} \norm{\theta}_{H_{\eta}^1(\Omega_M)}^2 + \overline{c} \varepsilon \norm{\phi}_{H_{\eta}^1(\Omega_M)}^2
\end{eqnarray*}
Taking $\varepsilon = \frac{\underline{c}}{2 \overline{c}}$ and multiplying the above inequality by $e^{-2\beta \tau}$, we  have
\begin{eqnarray*}
e^{-2\beta \tau}\frac{d }{d \tau}\norm{\phi}_{L_{\eta}^2(\Omega_M)}^2 - 2 \beta e^{-2\beta \tau} \norm{\phi}_{L_{\eta}^2(\Omega_M)}^2
+ \underline{c} e^{-2 \beta \tau} \norm{\phi}_{H_{\eta}^1(\Omega_M)}^2 \leq \frac{\overline{c}^2}{\underline{c}} e^{-2\beta \tau} \norm{\theta}_{H_{\eta}^1(\Omega_M)}^2
\end{eqnarray*}
Noting the zero initial condition and integrating in $\tau$ from $\tau_0=0$ to $s$,  we have
\[\norm{\phi}_{L_{\eta}^2(\Omega_M)}^2 + \underline{c}\int_{\tau_0}^s e^{-2\beta (\tau-s)}\norm{\phi}_{H_{\eta}^1(\Omega_M)}^2 d\xi \leq \frac{\overline{c}^2}{\underline{c}} \int_{\tau_0}^s e^{-2\beta (\tau-s)}\norm{\theta}_{H_{\eta}^1(\Omega_M)}^2 d\xi .
\]
Thus combining it with the approximating property (\ref{ProjectionApproximation}) of the projection $\P$, we finish the proof.
\end{proof}

\section{Numerical Experiments}
In this section, we demonstrate the performance of the FEM and its related techniques proposed and analyzed in the paper. We take linear finite elements. The time step size is $\Delta \tau=0.01$. In the spatial discretization, we localize the computational domain $(S_1,S_2)$ in $[0,80] \times [0,80]$.
We will implement the integral term explicitly and the differential term implicitly for the pricing PIDE.

\subsection{Case study for a polynomial option}
\noindent  To ensure the accuracy of the finite element method for the multi-asset option pricing, we test the algorithm for a sample problem, i.e., the polynomial option with a payoff $(S_1+S_2)^2$, whose analytic solutions under the Black-Scholes model and Merton's Jump diffusion model are given in Appendix \ref{AnalyticSol4PolynomialOptions}. The parameters for the jump diffusion model and the polynomial option are given in Table \ref{Parameters4PolynomialOption}.

\begin{table}[htbp]
\centering
\begin{tabular}{lr}
\toprule
Parameters & Values \\
   \midrule
Diffusion volatility($\sigma_1,\sigma_2$) & $0.1-0.3, 0.1-0.3$ \\
Mean jump size($\nu_1,\nu_2$) & $-0.9,-0.9$\\
Mean jump volatility($\gamma_1,\gamma_2$) & $0.45,0.45$\\
Jump intensity($\lambda_1,\lambda_2$) & $0.1,0.1$\\
Correlation($\rho$)& $0.3$ or $-0.3$\\
\midrule
Underlying price ($S_1,S_2$) & $40,40$ \\
Strike price ($K$) & $80$\\
Interest rate ($r$) & $0.05$\\
Time to maturity ($T-t$) & $0.1$ or $0.9$\\
\bottomrule
\end{tabular}
 \caption{Parameters for polynomial option: $(S_1+S_2)^2$}
 \label{Parameters4PolynomialOption}
\end{table}
\begin{table}[htbp]
\centering
\begin{tabular}{cc|cc|ccc|c}
   \toprule
   $\tau$ & $\rho$ & $\sigma_1$  & $\sigma_2$ & BS Analytic & JD Analytic & FEM &  Relative Error  \\
   \midrule
   $0.1$ & $0.3$ & $0.1$  & $0.1$ & $6.4363$ & $6.4899$ &6.5695 & $1.23\%$ \\
   $0.1$ & $0.3$ & $0.1$  & $0.2$ & $6.4421$ & $6.4958$ &6.4785 & $0.27\%$ \\
   $0.1$ & $0.3$ & $0.1$  & $0.3$ & $6.4511$ & $6.5050$ &6.4434 & $0.95\%$ \\
   $0.1$ & $0.3$ & $0.2$  & $0.2$ & $6.4488$ & $6.5026$ &6.4977 & $0.08\%$ \\
   $0.1$ & $0.3$ & $0.2$  & $0.3$ & $6.4589$ & $6.5128$ &6.4611 & $0.79\%$ \\
   $0.1$ & $0.3$ & $0.3$  & $0.3$ & $6.4699$ & $6.5239$ &6.4646 & $0.91\%$ \\
      \midrule
   $0.9$ & $0.3$ & $0.1$  & $0.1$ & $6.7339$ & $7.2753$ &7.3561  & $1.11\%$ \\
   $0.9$ & $0.3$ & $0.1$  & $0.2$ & $6.7892$ & $7.3380$ &7.2909  & $0.64\%$ \\
   $0.9$ & $0.3$ & $0.1$  & $0.3$ & $6.8781$ & $7.4398$ &7.3828  & $0.77\%$ \\
   $0.9$ & $0.3$ & $0.2$  & $0.2$ & $6.8536$ & $7.5208$ &7.5093  & $0.15\%$ \\
   $0.9$ & $0.3$ & $0.2$  & $0.3$ & $6.9518$ & $7.6412$ &7.6183  & $0.30\%$ \\
   $0.9$ & $0.3$ & $0.3$  & $0.3$ & $7.0593$ & $7.4098$ &7.4093  & $0.01\%$ \\
   \bottomrule
\end{tabular}
\caption{Results for polynomial option with positive correlation: $\rho = 0.3$.}
\label{PositiveCoeff4PolynomialOption}
\end{table}
In Tables \ref{PositiveCoeff4PolynomialOption} and \ref{NegativeCoeff4PolynomialOption},  We report the prices for different volatilities of two underlying assets: $(0.1,0.1)$, $(0.1,0.2)$, $(0.2,0.2)$, $(0.2,0.3)$, $(0.3,0.3)$ , time to maturity $0.1$ or $0.9$, and correlation: positive (Table \ref{PositiveCoeff4PolynomialOption}) or negative
(Table \ref{NegativeCoeff4PolynomialOption}). We have computed these prices using both analytic and finite element methods in the square domain. The relative error is also provided. The prices computed from the FEM differ only slightly from the analytic solution. The difference is especially small in the finest triangulation.
The prices with the positive correlation are slightly higher than those with the negative correlation.
\begin{table}[htbp]
\centering
\begin{tabular}{cc|cc|ccc|c}
   \toprule
   $\tau$ & $\rho$ & $\sigma_1$  & $\sigma_2$ & BS Analytic & JD Analytic & FEM &  Relative Error  \\
     \midrule
   $0.1$ & $-0.3$ & $0.1$  & $0.1$ & $6.4343$ & $6.4880$ &6.5622 & $1.14\%$ \\
   $0.1$ & $-0.3$ & $0.1$  & $0.2$ & $6.4382$ & $6.4919$ &6.5573 & $1.01\%$ \\
   $0.1$ & $-0.3$ & $0.1$  & $0.3$ & $6.4453$ & $6.4992$ &6.4784 & $0.32\%$ \\
   $0.1$ & $-0.3$ & $0.2$  & $0.2$ & $6.4411$ & $6.4949$ &6.4699 & $0.39\%$ \\
   $0.1$ & $-0.3$ & $0.2$  & $0.3$ & $6.4473$ & $6.5012$ &6.4816 & $0.30\%$ \\
   $0.1$ & $-0.3$ & $0.3$  & $0.3$ & $6.4525$ & $6.5065$ &6.4540 & $0.81\%$ \\
   \midrule
   $0.9$ & $-0.3$ & $0.1$  & $0.1$ & $6.7158$ & $7.2572$ &7.1881 & $0.95\%$ \\
   $0.9$ & $-0.3$ & $0.1$  & $0.2$ & $6.7530$ & $7.3018$ &7.2686 & $0.46\%$ \\
   $0.9$ & $-0.3$ & $0.1$  & $0.3$ & $6.8239$ & $7.3855$ &7.3250 & $0.82\%$ \\
   $0.9$ & $-0.3$ & $0.2$  & $0.2$ & $6.7813$ & $7.3375$ &7.2680 & $0.95\%$ \\
   $0.9$ & $-0.3$ & $0.2$  & $0.3$ & $6.8433$ & $7.4124$ &7.4990 & $1.17\%$ \\
   $0.9$ & $-0.3$ & $0.3$  & $0.3$ & $6.8966$ & $7.4785$ &7.4422 & $0.49\%$ \\
   \bottomrule
\end{tabular}
\caption{Results for polynomial option with the negative correlation: $\rho = -0.3$}\label{NegativeCoeff4PolynomialOption}
\end{table}

The prices for the Black-Scholes model are lower than its jump diffusion counterpart. This is because the jump component explains some volatility for the underlying in addition to the diffusion volatility. We can see that the jump volatility explains around $10 \%$ volatility in terms of prices. The solution surface for the polynomial option with jump diffusion model is given in Figure \ref{Fig:JD2DEuro-Polynomial}.
\begin{figure}[htbp]
      \centering
      \includegraphics[width=0.7\textwidth]{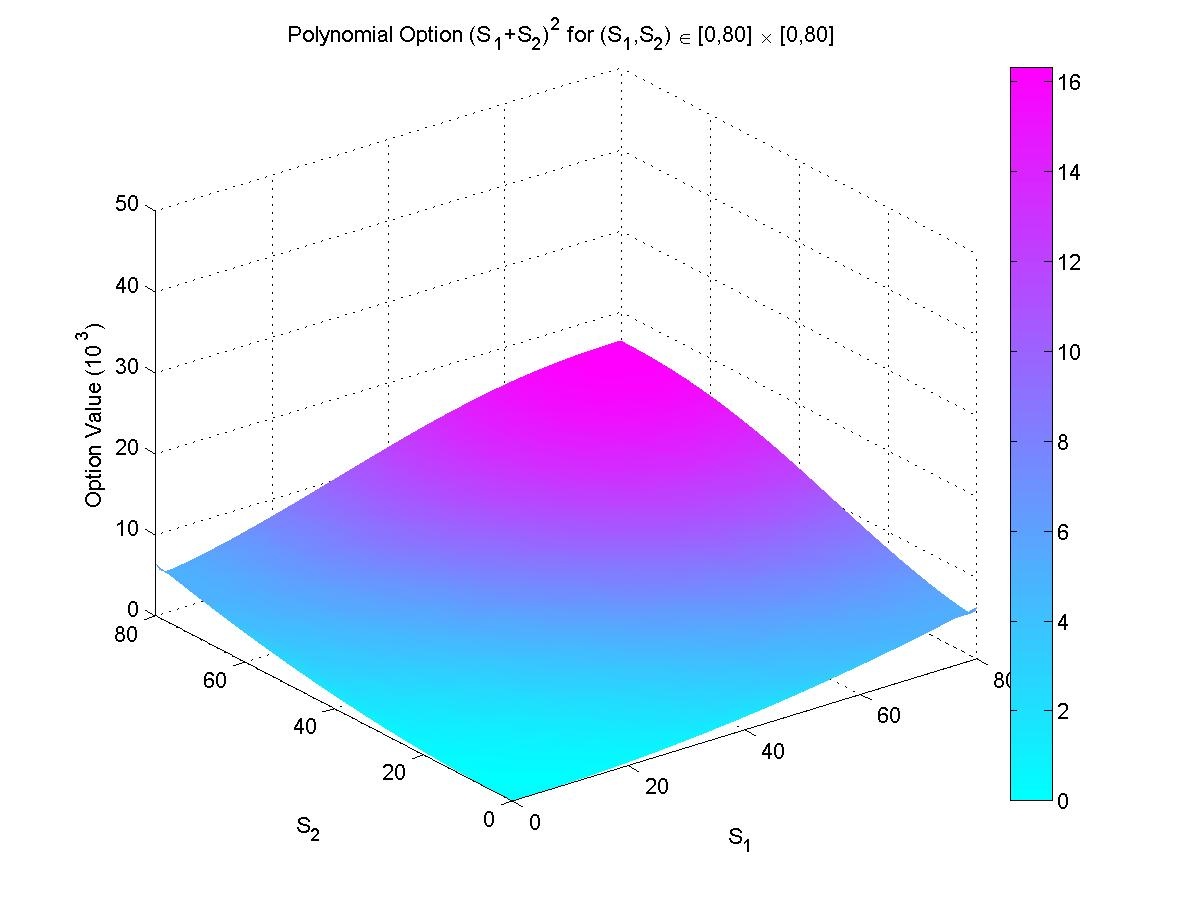}
      \caption{Polynomial option under Merton's jump diffusion model}
      \label{Fig:JD2DEuro-Polynomial}
\end{figure}

\begin{remark}
In Tables \ref{PositiveCoeff4PolynomialOption} and \ref{NegativeCoeff4PolynomialOption}, the prices is in unit of $1000$ and $\tau$ is the time to maturity. ``BS Analytic"  and ``JD Analytic" are analytic solutions provided in Appendix \ref{AnalyticSol4PolynomialOptions}. FEM is the solution computed by the FEM  and the relative error is defined as $\frac{|FEM-JD_Analytic|}{JD_Analytic}$.
\end{remark}

\subsection{Case study for other multi-asset options}
\noindent The basket put option is similar to a plain vanilla option except that the underlying is replaced by the weighted sum of the assets composing the basket. The payoff of a basket put option with positive weights $(w_1,w_2,\cdots,w_d)$ and strike price $K$ is given by
\[\left(K-\sum_{i=1}^{d}w_iS_i(T)\right)^+ .\]
The parameters for the underlying dynamics and the basket put option are given in Table \ref{Paramters4BasketPut}. Those model coefficients are of a magnitude that would be plausible in a real market.
\begin{table}
\centering
\begin{tabular}{lr}
\toprule
Parameters & Values \\
   \midrule
Diffusion volatility ($\sigma_1,\sigma_2$) & $0.3,0.3$ \\
Mean jump size ($\nu_1,\nu_2$) & $-0.9,-0.9$\\
Mean jump volatility ($\gamma_1,\gamma_2$) & $0.45,0.45$\\
Jump intensity ($\lambda_1,\lambda_2$) & $0.1,0.1$\\
Correlation ($\rho$)& $0.3$\\
\midrule
Underlying price ($S_1,S_2$) & $40,40$ \\
Weights ($w_1,w_2$) & $0.5,0.5$ \\
Strike price ($K$) & $40$\\
Interest rate ($r$) & $0.05$\\
Time to maturity ($T-t$) & $1$\\
\bottomrule
\end{tabular}
 \caption{Parameters for a basket put option: $\left(K-\sum_{i=1}^2 w_iS_i\right)^+$}
 \label{Paramters4BasketPut}
\end{table}

In order to facilitate the comparison with plain vanilla options, it is convenient to decompose the strike price $K$ as a function of the weights $w_i$, the initial prices of underlying and a vector of parameters $k_i$ that can be interpreted as indicators of moneyness of the plain vanilla options on the underlying assets:
\[K=\sum_{i=1}^dw_i\cdot k_i \cdot S_i(0).\]
If all the parameters $k_i$ are equal to 1, i.e., the individual plain vanilla options are at the money, the payoff of the basket put option can be rewritten as
\[\left(\sum_{i=1}^{d}w_i (S_i(0) - S_i(T)) \right)^+ .\]

From Jensen's inequality it follows that
\[\left( \sum_{i=1}^{d}w_i(S_i(0)-S_i(T))\right)^+ \leq \sum_{i=1}^{d} w_i \left( S_i(0)-S_i(T)\right)^+,\]
which leads us to the conclusion that a basket put option will be always cheaper than the portfolio of plain vanillas (with weights $w_i$) written on the same underlying assets.



In Table \ref{Results4Multi-AssetOption} we report the prices of basket put option, maximum/minimum of 2 put options for different volatilities of two underlying assets: $(0.1,0.1)$, $(0.1,0.2)$, $(0.2,0.2)$, $(0.2,0.3)$, $(0.3,0.3)$ , time to maturity $0.1,0.9$ and positive correlation $0.3$.
\begin{table}[htbp]
\centering
\begin{tabular}{ccc|c|c|c}
   \toprule
   $\tau$  & $\sigma_1$  & $\sigma_2$ & Basket Put  & Max of 2 Put & Min of 2 Put \\
     \midrule
   $0.1$  & $0.1$  & $0.1$ & $1.8015$ & $1.7997$ &1.8038 \\
   $0.1$  & $0.1$  & $0.2$ & $1.8342$ & $1.8329$ &1.8389 \\
   $0.1$  & $0.1$  & $0.3$ & $1.9127$ & $1.9096$ &1.9161 \\
   $0.1$  & $0.2$  & $0.2$ & $1.8857$ & $1.8806$ &1.8901 \\
   $0.1$  & $0.2$  & $0.3$ & $1.9834$ & $1.9794$ &1.9873 \\
   $0.1$  & $0.3$  & $0.3$ & $2.0723$ & $2.0702$ &2.0782 \\
   \midrule
   $0.9$  & $0.1$  & $0.1$ & $0.6059$ & $0.6012$ &0.7002  \\
   $0.9$  & $0.1$  & $0.2$ & $1.2413$ & $1.2392$ &1.2485 \\
   $0.9$  & $0.1$  & $0.3$ & $1.9280$ & $1.9231$ &1.9317  \\
   $0.9$  & $0.2$  & $0.2$ & $1.7769$ & $1.7700$ &1.7810  \\
   $0.9$  & $0.2$  & $0.3$ & $2.4397$ & $2.4352$ &2.4427 \\
   $0.9$  & $0.3$  & $0.3$ & $3.0011$ & $2.9923$ &3.0123  \\
   \bottomrule
\end{tabular}
\caption{Results for some two-asset put options}
\label{Results4Multi-AssetOption}
\end{table}
For positive weights $w_1,w_2$ satisfying $w_1+w_2=1$ we have
\[ (K-\max(S_1, S_2))^+ \leq (K-(w_1 S_1+w_2 S_2))^+  \leq (K-\min(S_1, S_2))^+ .\]
This implies that at any time, the price of basket put option is bigger than maximum of 2 put option and smaller than minimum of 2 put option, as shown in Table \ref{Results4Multi-AssetOption}.
The solution surfaces of basket put option, maximum of two put option and minimum of two put option are given in Figures \ref{Fig:JD2DEuro-BasketPut}, \ref{Fig:JD2DEuro-MaxPut} and \ref{Fig:JD2DEuro-MinPut}.

\begin{figure}[htbp]
      \centering
      \includegraphics[width=0.7\textwidth]{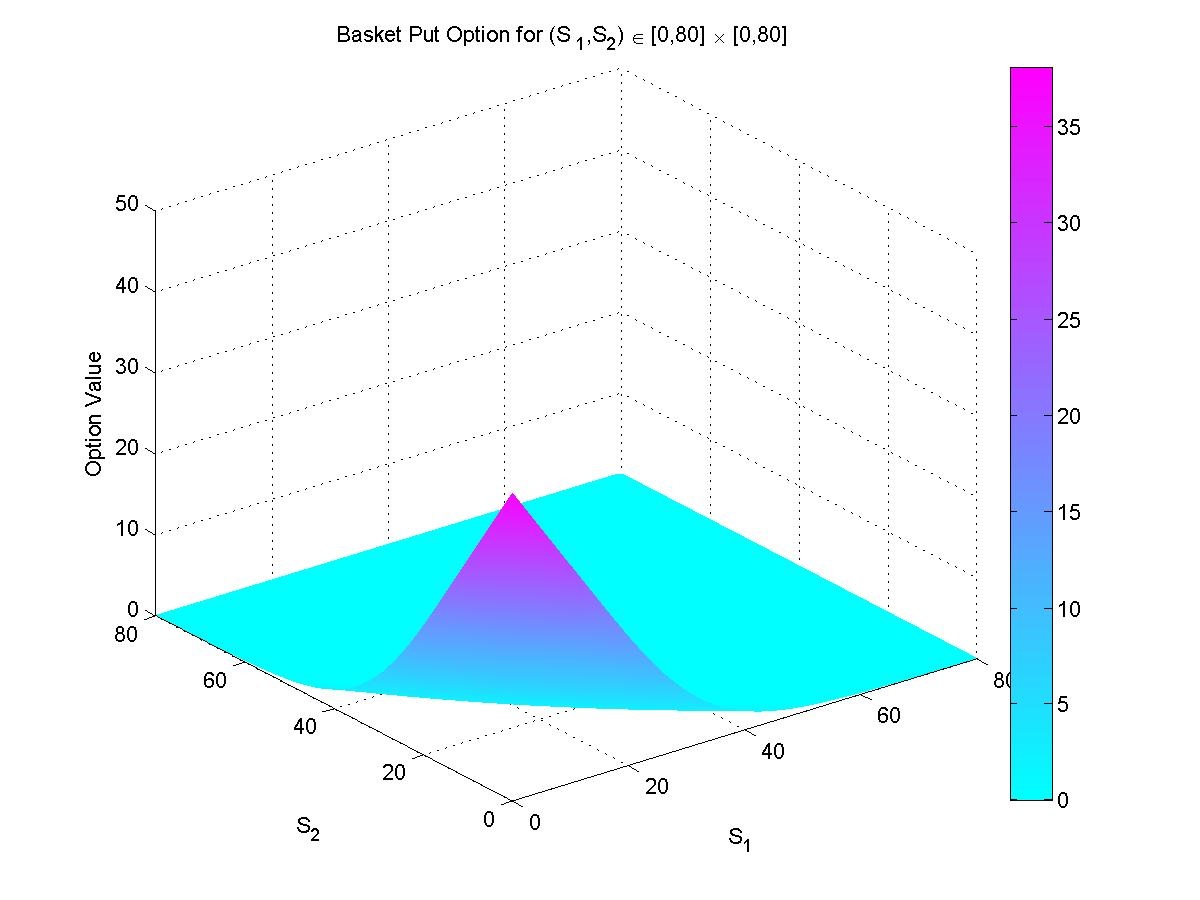}
      \caption{Basket put option under Merton's jump diffusion model}
      \label{Fig:JD2DEuro-BasketPut}
\end{figure}

\begin{figure}[htbp]
      \centering
      \includegraphics[width=0.7\textwidth]{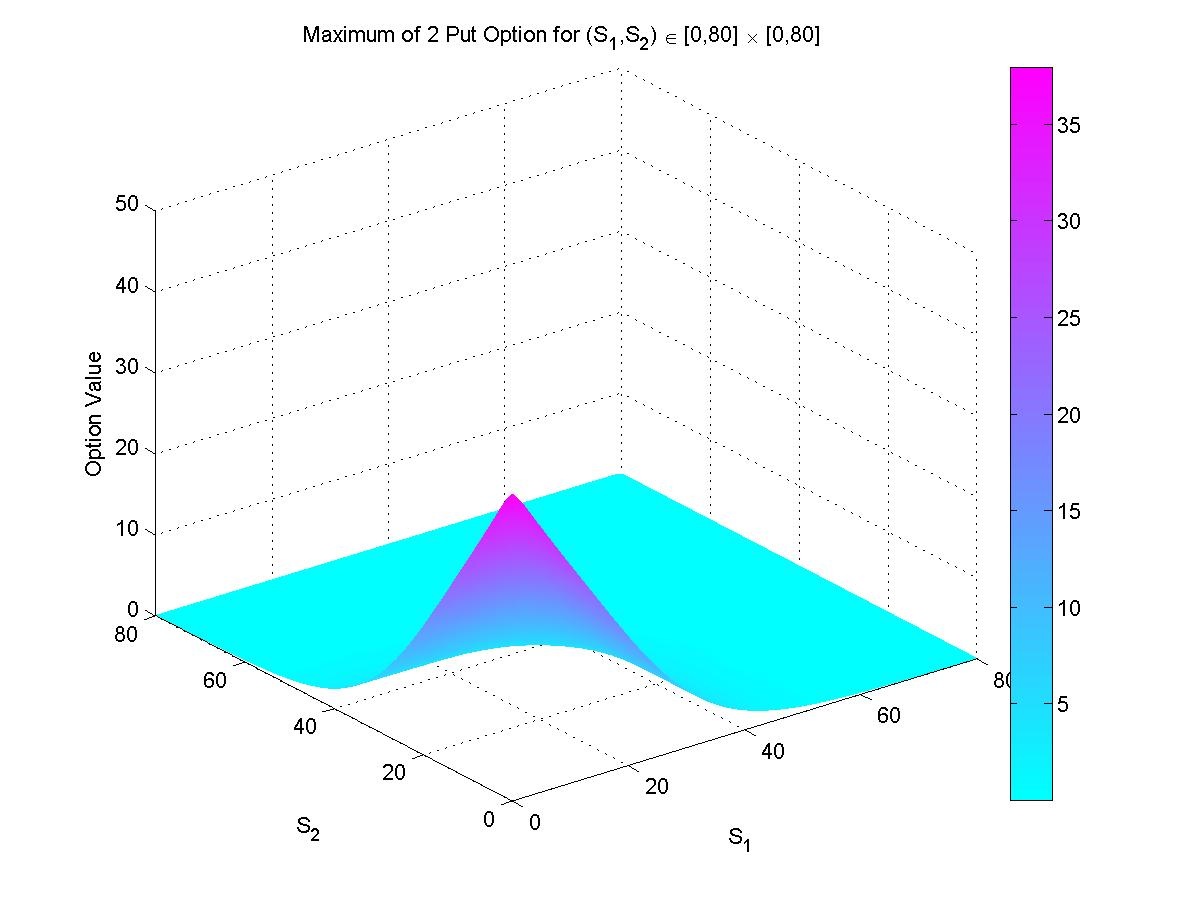}
      \caption{Maximum of 2 put option under Merton's jump diffusion model}
      \label{Fig:JD2DEuro-MaxPut}
\end{figure}

\begin{figure}[htbp]
      \centering
      \includegraphics[width=0.7\textwidth]{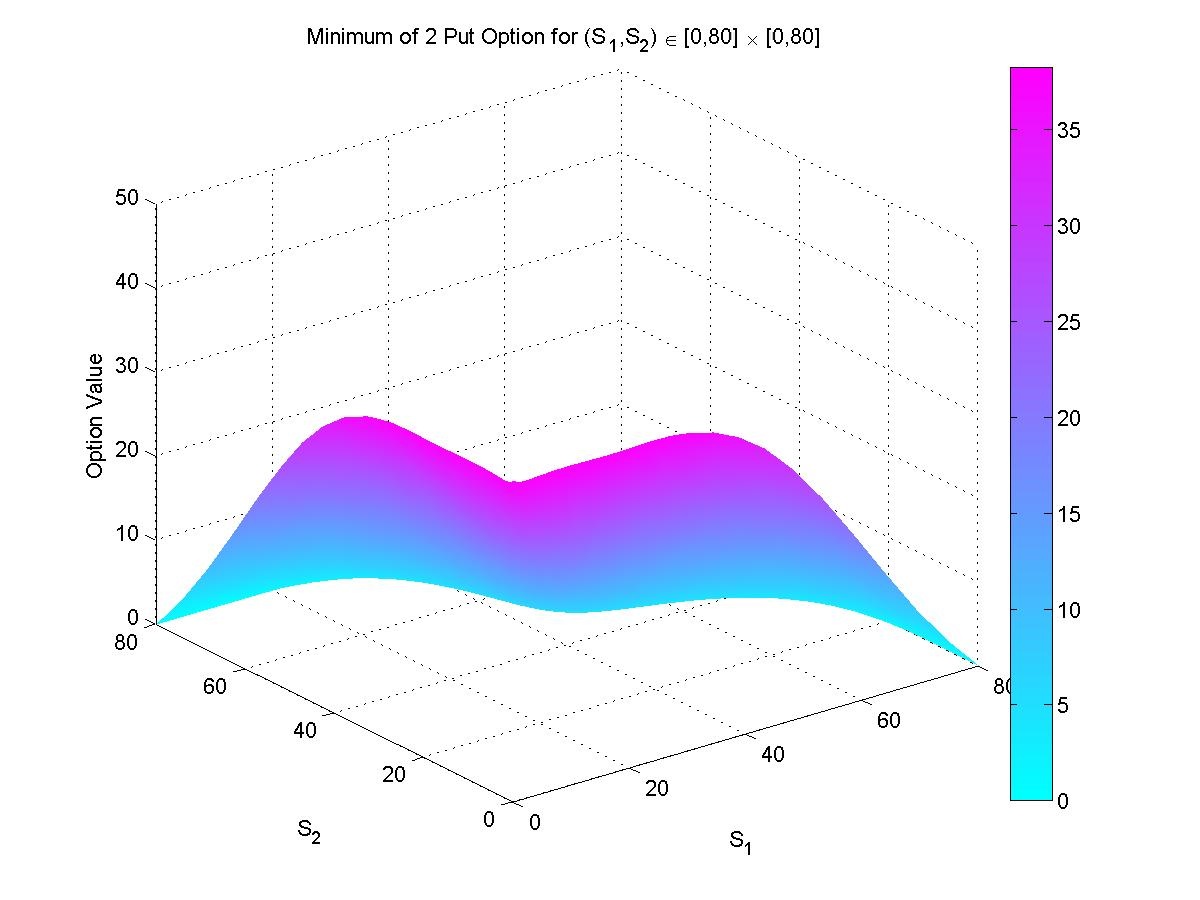}
      \caption{Minimum of 2 put option under Merton's jump diffusion model}
      \label{Fig:JD2DEuro-MinPut}
\end{figure}
\section*{Acknowledgments}
The authors are grateful to Olivier Pironneau for introducing their software \emph{FreeFem++} which simplifies some of the numerical experiments and Xiliang Lu for his valuable discussion in part of the numerical analysis of this work.

\appendix
\section{Property for $\eta(x)$}\label{AppendixA}
Suppose
\begin{eqnarray}\label{eta1}
\eta(x)=\left\{
\begin{array}{ll}
\eta_{1} (|x_1|+ |x_2|)& \textrm{if $x_1<0,x_2<0$,}\\
\eta_{1} |x_1|+ \eta_2 |x_2| & \textrm{if $x_1<0,x_2>0$,}\\
\eta_{2} |x_1|+ \eta_1 |x_2| & \textrm{if $x_1>0,x_2<0$,}\\
\eta_{2} (|x_1|+ |x_2|)& \textrm{if $x_1>0,x_2>0$.}
\end{array}
\right.
\end{eqnarray}
where $\eta_1>1$ and $\eta_2 >1$.
Then $\eta(x)$ satisfies $\eta\in L_{loc}^1(\mathbb{R})$,
$\nabla \eta \in L^{\infty}(\mathbb{R})$ and
\begin{eqnarray}
\Delta_{\theta}\eta=\eta(x+\theta y)-\eta(x)\leq \eta(y), \quad
\forall x,y \in \mathbb{R}^2,\norm{\theta}\leq 1.
\end{eqnarray}
We discuss it in the following $16(=2^4)$ cases:
\begin{itemize}
    \item [Case 1.] If $x_i+\theta y_i>0, x_i>0$ for $i=1,2$, then
    $\Delta_{\theta}\eta=\eta_2\theta(y_1+y_2)$, thus
\begin{displaymath}
\Delta_{\theta}\eta<\left\{
\begin{array}{ll}
-\eta_1(y_1+y_2)& \textrm{if $y_1<0,y_2<0$,}\\
-\eta_1 y_1+\eta_2 y_2 & \textrm{if $y_1<0,y_2>0$,}\\
\eta_2 y_1 - \eta_1 y_2 & \textrm{if $y_1>0,y_2<0$,}\\
\eta_2(y_1+y_2) & \textrm{if $y_1>0,y_2>0$.}
\end{array}
\right.
\end{displaymath}
which means $\Delta_{\theta}\eta<\eta(y)$.
    \item [Case 2.] If $x_i+\theta y_i>0$ for $i=1,2$ and $x_1>0,x_2<0$, then
\[\Delta_{\theta}\eta=\eta_2\theta(y_1+y_2)+(\eta_1+\eta_2)x_2\leq \eta_2\theta(y_1+y_2).\]
Note that $y_2>0$ in this case and
\begin{displaymath}
\eta(y)=\left\{
\begin{array}{ll}
-\eta_1y_1+\eta_2y_2& \textrm{if $y_1<0$,}\\
\eta_2 (y_1 + y_2) & \textrm{if $y_1>0$.}
\end{array}
\right.
\end{displaymath}
Thus $\Delta_{\theta}\eta<\eta(y)$.
    \item [Case 3.] If $x_i+\theta y_i>0$ for $i=1,2$ and $x_1<0,x_2>0$, then
    \[\Delta_{\theta}\eta=\eta_2\theta(y_1+y_2)+(\eta_1+\eta_2)x_1\leq \eta_2\theta(y_1+y_2)\leq \eta_2(y_1+y_2).\]
    Note that $y_1>0$ in this case and
\begin{displaymath}
\eta(y)=\left\{
\begin{array}{ll}
\eta_2y_1-\eta_1y_2& \textrm{if $y_2<0$,}\\
\eta_2 (y_1 +  y_2) & \textrm{if $y_2>0$.}
\end{array}
\right.
\end{displaymath}
Thus $\Delta_{\theta}\eta<\eta(y)$.
    \item [Case 4.] If $x_i+\theta y_i>0, x_i<0$ for $i=1,2$, then
    \[\Delta_{\theta}\eta=\eta_2\theta(y_1+y_2)+(\eta_1+\eta_2)(x_1+x_2)\leq \eta_2\theta(y_1+y_2)\leq \eta (y_1+y_2). \]
    Note that $y_1>0,y_2>0$ in this case, which implies
    $\eta(y)=\eta_2(y_1+y_2)$.

Thus $\Delta_{\theta}\eta<\eta(y)$.

\item [Case 5-16.] These remaining $12$ cases can be grouped by changing the signs of $x_i+\theta y_i$ for $i=1,2$. They are similar to the above four cases. we
omit here.
\end{itemize}
More generally, we can get the following result:
\begin{eqnarray}
\pm\big(\eta(x+\theta y)-\eta(x)\big)\leq \eta(\pm y), \quad \forall x,y \in
\mathbb{R}^2,\norm{\theta}\leq 1.
\end{eqnarray}

\section{Analytic solution for polynomial option}\label{AnalyticSol4PolynomialOptions}
Let's consider options whose payoff is a polynomial function of the underlying price at expiration; so-called \emph{polynomial option}. Here we focus on the polynomial option with payoff $H(S_1,S_2)= (S_1+S_2)^2$, which is a differentiable function w.r.t $S_1,S_2$. Before deriving the exact solution of PIDE to pricing the polynomial options under Exponential \levy models, let's work in the well-known Black-Scholes-Merton framework. We obtain that the closed-form formula $V_{BS}(t,S_1,S_2)$ for the polynomial option solves the following two dimensional BS-type PDE:
\begin{eqnarray}\label{eq:BS4PowerOption}
\left\{
\begin{array}{l}
\dsp\frac{\partial V}{\partial t}+ r S_1\frac{\partial V}{\partial S_1} + rS_2 \frac{\partial V}{\partial S_2}-rV \\[2mm]
\dsp \qquad + \frac{1}{2}\sigma_1^2 S_1^2 \frac{\partial^2 V}{\partial S_1^2} + \sigma_1\sigma_2 \rho S_1 S_2\frac{\partial^2 V}{\partial S_1\partial S_2} + \frac{1}{2}\sigma_2^2S_2^2\frac{\partial^2 V}{\partial S_2^2} = 0, \\ [2mm]
V(T,S_1, S_2) = V_{BS}(T,S_1,S_2) = (S_1+S_2)^2, \\ [2mm]
V(t,S_1, 0) =  S_1^2 e^{(r+\sigma_1^2)(T-t)}, \quad  t<T, \\ [2mm]
V(t,0, S_2) = S_2^2 e^{(r+\sigma_2^2)(T-t)}, \quad  t<T, \\ [2mm]
V(t,S_1, S_2) = V_{BS}(t,S_1,S_2), \quad \mbox{if } S_1,S_2 \rightarrow \infty \mbox{ and } t<T,
\end{array}\right.
\end{eqnarray}
where
\begin{equation}\label{sol:BS4PowerOption}
V_{BS}(t,S_1,S_2)=S_1^2 e^{(r+\sigma_1^2)(T-t)}+ S_2^2 e^{(r+\sigma_2^2)(T-t)} + 2 S_1 S_2 e^{(r+\rho\sigma_1\sigma_2)(T-t)}.
\end{equation}

If the polynomial option is priced under exponential \levy model, then the pricing equation becomes:
\begin{eqnarray}\label{eq:Levy4PowerOption}
\nonumber & &\frac{\partial V}{\partial t} + rS_1\frac{\partial V}{\partial S_1}+rS_2\frac{\partial V}{\partial S_2} + \frac{1}{2}\sigma_1^2S_1^2\frac{\partial ^2 V}{\partial S_2^2} + \rho \sigma_1 \sigma_2 S_1 S_2\frac{\partial^2 V}{\partial
S_1\partial S_2}+\frac{1}{2}\sigma_2^2S_2^2\frac{\partial ^2 V}{\partial S_2^2}- rV\\
& & \qquad \qquad + \int_{\mathbb{R}}\big(V(t,S_1e^x,S_2)-V(t,S_1,S_2)-S_1(e^x-1) \frac{\partial V}{\partial S_1}(t,S_1,S_2)\big)\nu_1(dx)\\
\nonumber & & \quad + \int_{\mathbb{R}}\big(V(t,S_1,S_2e^x)-V(t,S_1,S_2)-S_2(e^x-1) \frac{\partial V}{\partial S_2}(t,S_1,S_2)\big)\nu_2(dx)=0,
\end{eqnarray}
where $\nu_i(dx)=\lambda_i e^xp_i(e^x)dx, \lambda_1\neq 0, \lambda_2 \neq 0.$

From the exact solution (\ref{sol:BS4PowerOption}) of polynomial option under the Black-Scholes model, we know $S_1^2e^{(r+\sigma_1^2)(T-t)}, S_2^2e^{(r+\sigma_2^2)(T-t)}, S_1S_2e^{(r+\rho\sigma_1\sigma_2)(T-t)}$ are all functions satisfying the first equation of (\ref{eq:BS4PowerOption}). So it is not feasible to use their linear combination to construct the exact solution of
(\ref{eq:Levy4PowerOption}).

Since the jump component of  (\ref{eq:Levy4PowerOption}) will increase the volatility of the stock price, it is natural to assume that the solution of (\ref{eq:Levy4PowerOption}) is the following:
\begin{equation}\label{sol:Levy4PowerOption}
V_{ELM}(t,S)=S_1^2e^{(r+ a \sigma_1^2)(T-t)} + S_2^2e^{(r+b\sigma_2^2)(T-t)} + 2S_1S_2e^{(r+ c \rho\sigma_1\sigma_2)(T-t)},
\end{equation}
where constants $a,b,c$ are to be specified.

Substituting this solution into equation (\ref{eq:Levy4PowerOption}) and rearranging the terms, the non-jump terms
can be rewritten as:
\begin{eqnarray}\label{secondorder term}
&& \frac{\partial V}{\partial t}+rS_1\frac{\partial V}{\partial S_1}+rS_2\frac{\partial V}{\partial S_2} - rV \\
\nonumber && + \frac{1}{2}\sigma_1^2S_1^2\frac{\partial ^2 V}{\partial S_2^2} + \rho \sigma_1 \sigma_2 S_1 S_2\frac{\partial^2 V}{\partial S_1 \partial S_2} + \frac{1}{2}\sigma_2^2 S_2^2\frac{\partial ^2 V}{\partial S_2^2} \\
\nonumber &=& \dsp(1-a)\sigma_1^2S_1^2e^{(r+a\sigma_1^2)(T-t)}+(1-b)\sigma_2S_2^2e^{(r+b\sigma_2^2)(T-t)} \\
\nonumber&& + 2(1-c)\sigma_1\sigma_2S_1S_2e^{(r+c\rho\sigma_1\sigma_2)(T-t)}.
\end{eqnarray}
The jump term in (\ref{eq:Levy4PowerOption}) can also be rearranged as follows:
\begin{eqnarray}\label{jumpterm}
&& \int_{\mathbb{R}}\big(C(t,S_1e^x,S_2)-C(t,S_1,S_2)-S_1(e^x-1) \frac{\partial C}{\partial S_1}(t,S_1,S_2)\big)\nu_1(dx)\\
\nonumber && +\int_{\mathbb{R}}\big(C(t,S_1,S_2e^x)-C(t,S_1,S_2)-S_2(e^x-1) \frac{\partial C}{\partial S_2} (t,S_1,S_2)\big) \nu_2(dx)\\
\nonumber &=&  \lambda_1 \int_{\mathbb{R}}(e^x-1)^2 S_1^2e^{(r+a\sigma_1^2)(T-t)}e^xp_1(e^x)dx \\
\nonumber && + \lambda_2 \int_{\mathbb{R}}(e^x-1)^2 S_2^2e^{(r+b\sigma_2^2)(T-t)}e^xp_2(e^x)dx\\
\nonumber &=&  \Lambda_1 S_1^2e^{(r+a\sigma_1^2)(T-t)}+\Lambda_2 S_2^2e^{(r+b\sigma_2^2)(T-t)},
\end{eqnarray}
where $\Lambda_i=\lambda_i \int_{\mathbb{R}^+} (y-1)^2p_i(y)dy=\lambda_i \big(e^{2(\nu_i+\gamma_i^2)} - 2e^{(\nu_i+\frac{1}{2}\gamma_i^2)} - 1\big),i=1,2.$

Thus combining (\ref{secondorder term}) and (\ref{jumpterm}) together we have :
\begin{eqnarray*}
\left \{
\begin{array}{rcl}
(1-a)\sigma_1^2+ \Lambda_1 &=&0\\
(1-b)\sigma_2^2+ \Lambda_2 &=&0\\
2(1-c)\sigma_1\sigma_2+0 &=&0
\end{array}
\right.
\end{eqnarray*}
and then
\[a \sigma_1^2=\sigma_1^2+\Lambda_1,b\sigma_2^2=\sigma_2^2+\Lambda_2,c=1.\]
Therefore, we can obtain the exact solution (\ref{sol:Levy4PowerOption}) of (\ref{eq:Levy4PowerOption}):
\begin{eqnarray}
V_{ELM}(t,S_1,S_2) &=& S_1^2 e^{(r+\sigma_1^2+ \Lambda_1)(T-t)}+ S_2^2 e^{(r+\sigma_2^2+\Lambda_2)(T-t)}\\
\nonumber          & & + 2 S_1 S_2 e^{(r+ \rho\sigma_1\sigma_2)(T-t)}
\end{eqnarray}
with the following boundary conditions:
\begin{eqnarray}
\qquad \left \{
\begin{array}{rcl}
V(T,S_1,S_2)&=& (S_1+S_2)^2,\\[2mm]
V(t,S_1,0)  &=& S_1^2 e^{(r+\sigma_1^2+ \Lambda_1)(T-t)}, \quad t<T, \\[2mm]
V(t,0,S_2)  &=& S_2^2 e^{(r+\sigma_2^2+\Lambda_2)(T-t)}, \quad t<T, \\[2mm]
V(t,S_1, S_2) &=& V_{ELM}(t,S_1,S_2), \quad S_1,S_2 \rightarrow \infty \mbox{ and } t<T.
\end{array}
\right.
\end{eqnarray}
By the variable transformation (\ref{eq:var_transforamtion}):
\[x_i=\ln S_i,\tau=T-t,V(\tau,x)=e^{r\tau}C(T-\tau,e^{x_1},e^{x_2}),\]
it is easy to derive the solution $V(\tau,x)$ for the following pricing PIDE:
\begin{eqnarray}
\qquad \qquad \left\{\begin{array}{rcl} \dsp\frac{\partial V}{\partial
\tau}(\tau,x)&=&\nabla\cdot (\kappa \nabla V)+ \nabla \cdot
(\alpha V)+\mathcal{I}[V](\tau,x),\\[2mm]
V(0,x) &=& (e^{x_1}+e^{x_2})^2, \quad (x)\in \Omega,\\[2mm]
V(\tau,x) &=& e^{2x_1+(2r+\sigma_1^2+\Lambda_1)\tau}, \quad
x_2 \rightarrow -\infty,\tau \in
(0,T]\\[2mm]
V(\tau,x)&=& e^{2x_2+(2r+\sigma_2^2+ \Lambda_2)\tau},\quad x_1 \rightarrow -\infty,\tau \in (0,T] \\[2mm]
V(\tau,x) &=& V_{ELM}(\tau,x), \quad \mbox{if } x_1,x_2 \rightarrow \infty,\tau\in (0,T].
\end{array}\right.
\end{eqnarray}
Its exact solution is:
\[V_{ELM}(\tau,x)=e^{2x_1+(2r+\sigma_1^2+ \Lambda_1)\tau}+
e^{2x_2+(2r+\sigma_2^2+\Lambda_2)\tau} +2e^{x_1+x_2+(2r+
\rho\sigma_1\sigma_2)\tau}.\]


\begin{thebibliography}{10}

\bibitem{Amadori:2000}
{\sc A.L. Amadori}, {\em Differential and Integro-Differential Nonlinear Equations of Degenerate Parabolic Type Arising in the Pricing of Derivatives in Incomplete Markets}, PhD Thesis, University of  Roma I ``La Sapienza'', 2000.

\bibitem{Admin:1993}
{\sc K.I. Amin}, {\em Jump diffusion option valuation in discrete time}, Journal of Finance, 48 (1993), pp.~1833--1863.

\bibitem{AndersenAndreasen:2000}
{\sc L. Andersen and J. Andreasen}, {\em Jump-diffusion processes: volatility smile fitting and numerical methods for option pricing}, Review of
  Derivatives Research, 4 (2000), pp.~231--262.

\bibitem{David:2004}
{\sc D. Applebaum}, {\em L\'{e}vy Processes and Stochastic Calculus},
  Cambridge University Press, 2004.

\bibitem{Belomestny:2012}
{\sc D. Belomestny}, {\em Solving optimal stopping problems via empirical dual optimization}, The Annals of Applied Probability, (2012), to appear.

\bibitem{BlackScholes:1973}
{\sc F. Black and M. Scholes}, {\em The pricing of options and corporate
  liabilities}, The Journal of Political Economy,  (1973), pp.~637--59.

\bibitem{Briani:2004}
{\sc M. Briani, C.L. Chioma, and R. Natalini}, {\em Convergence of numerical schemes for viscosity solutions to integro-differential degenerate parabolic problems arising in financial theory}, Numerische Mathematik, 98
  (2004), pp.~607--646.

\bibitem{Dupire:1994}
{\sc D. Bruno}, {\em Pricing with a Smile}, Risk Magazine, (1994), pp.~126--129.

\bibitem{Forsyth:2008}
{\sc S.S. Clift and P.A. Forsyth}, {\em Numerical solution of two ssset jump diffusion models for option valuation}, Applied Numerical Mathematics, 58 (2008), pp.~743--782.

\bibitem{Rama:2003}
{\sc R. Cont and P. Tankov}, {\em Financial Modelling with Jump Processes}, CRC Press, 2003.

\bibitem{RamaCont:2005}
{\sc R. Cont and E. Voltchkova}, {\em A finite difference scheme for option pricing in jump diffusion and exponential l\'{e}vy models}, SIAM Journal on Numerical Analysis, 43 (2005), pp.~1596--1626.

\bibitem{Derman:1994}
{\sc E. Derman and I. Kani}, {\em Riding on a smile}, Risk, 7 (1994), pp.~32--39.

\bibitem{Forsyth:2004}
{\sc Y. d'Halluin, P.A. Forsyth, and G. Labahn}, {\em A penalty method for \uppercase{A}merican options with jump diffusion processes}, Numerische
  Mathematik, 97 (2004), pp.~321--352.

\bibitem{FL:2008}
{\sc L., Feng, and V. Linetsky}, {\em Pricing Options in Jump-Diffusion Models: An Extrapolation Approach}, Operations Research, 56 (2008), pp.~304--325.

\bibitem{Forsyth:2005}
{\sc P.A. Forsyth and K.R. Vetzal}, {\em Robust numerical methods for contingent claims under jump diffusion processes}, IMA Journal of Numerical Analysis, 25 (2005), pp.~87--112.

\bibitem{FujiwaraKunita:1985}
{\sc T. Fujiwara and H. Kunita}, {\em Stochastic differential equations of jump type and l\'{e}vy processes in diffeomorphism group}, J. Math. Kyoto Univ., 25 (1985), pp.~71--106.

\bibitem{Maria:1995}
{\sc M.G. Garroni and J. L. Menaldi}, {\em Maximum principles for integro-differential parabolic operators}, Differential and Integral Equations, 8 (1995), pp.~161--182.

\bibitem{Geman:2002}
{\sc H. Geman}, {\em Pure jump \uppercase{L}\'{e}vy processes for asset price modeling}, Journal of Banking and Finance, 26 (2002), pp.~1297--1316.

\bibitem{GM:2012}
{\sc P.S. Griffin, and R.A. Maller}, {\em Path decomposition of ruinous behavior for a general L��vy insurance risk process}, The Annals of Applied Probability, 44 (2012), pp.1411--1449.

\bibitem{Hagan:2002}
{\sc P.S. Hagan, D. Kumar, A.S. Lesniewski, and D.E. Woodward}, {\em Managing smile risk}, Wilmott,  (2002), pp.~84--108.

\bibitem{Heston:1993}
{\sc S.L. Heston}, {\em A closed-form solution for options with stochastic volatility with applications to bond and currency options}, Review of Financial Studies, 6 (1993), pp.~327--343.

\bibitem{HullWhite:1987}
{\sc J.C. Hull and A.D. White}, {\em The pricing of options on assets with stochastic volatilities}, Journal of Finance, 42 (1987), pp.~281--300.

\bibitem{JP:2010}
{\sc M. Jeannina, and M. Pistorius}, {\em A transform approach to compute prices and Greeks of barrier options driven by a class of L��vy processes}, 10 (2010), pp.~629--644.

\bibitem{Kou:2002}
{\sc S.G. Kou}, {\em A jump-diffusion model for option pricing}, 48 (2002), pp.~1086--1101.

\bibitem{KL:2009}
{\sc O. Kudryavtsev, and S. Levendorskii}, {\em Fast and accurate pricing of barrier options under L��vy processes}, Finance and Stochastics, 13 (2009), pp.~531--562.

\bibitem{Lions:1972}
{\sc J.L. Lions and E. Magenes}, {\em Non-Homogeneous Boundary Value Problems and Applications}, vol.~1,2, Springer-Verlag, 1972.

\bibitem{Lin:2008}
{\sc P. Lin, J.J.H. Miller, and G.I. Shishkin}, {\em Analytic and experimental studies of the errors in numerical methods for the valuation of options}, Numer Math: Theory, Methods and Aplications (special issue dedicated to the 80th birthday of Professor Su Yucheng), 1(2008), No. 2, pp.~150--164.

\bibitem{Martzoukos:2003}
{\sc S.H. Martzoukos}, {\em Contingent claims on foreign assets following jump-diffusion processes}, Review of Derivatives Research, 6 (2003), pp.~27--45.

\bibitem{Schwab:2004}
{\sc A.-M. Matache, T.V. Petersdorff, and C. Schwab}, {\em Fast deterministic pricing of options on \uppercase{L}\'{e}vy driven assets}, ESAIM: Mathematical Modelling and Numerical Analysis - Mod�lisation Math�matique et Analyse Num�rique, 38 (2004), pp.~37--71.

\bibitem{Schwab:2005a}
{\sc A.-M. Matache, P.-A. Nitsche, and C. Schwab}, {\em Wavelet galerkin pricing of \uppercase{A}merican options on \uppercase{L}\'{e}vy driven assets}, Quantitative Finance, 5 (2005), pp.~403--424.

\bibitem{Schwab:2005b}
{\sc A.-M. Matache, C. Schwab, and T.P. Wihler}, {\em Fast numerical solution of parabolic integrodifferential equations with applications in finance}, SIAM Journal on Scientific Computing, 27 (2005), pp.~369--393.

\bibitem{Merton:1973}
{\sc R.C. Merton}, {\em Theory of rational option pricing}, Bell Journal of Economics and Management Science, 4 (1973), pp.~141--183.

\bibitem{Merton:1976}
{\sc R.C. Merton}, {\em Option pricing when underlying stock returns are discontinuous}, Journal of Financial Economics, 3 (1976), pp.~125--144.

\bibitem{Pang:2008}
{\sc H.K. Pang, Y.Y. Zhang, and X.-Q. Jin}, {\em Preconditioning techniques for a family of toeplitz systems with financial applications},  Preprint, 2009.

\bibitem{Sato:2002}
{\sc K. Sato}, {\em  \uppercase{L}\'{e}vy  Processes and Infinitely Divisible Distributions}, Cambridge Studies in Advanced Mathematics, Cambridge University Press, 2002.

\bibitem{Schoutens:2003}
{\sc W. Schoutens}, {\em \uppercase{L}\'{e}vy Processes in Finance}, Wiley Series in Probability and Statistics, Wiley Publ., 2003.

\bibitem{Thomee:2006}
{\sc V. Thomee}, {\em Galerkin Finite Element Methods for Parabolic Problems}, Springer, 2006.

\bibitem{Toivanen:2008}
{\sc J. Toivanen}, {\em Numerical Valuation of European and American Options under Kou's Jump-Diffusion Model}, SIAM on Journal of Scientific Computing, 30 (2008), pp.~1949--1970.

\bibitem{Topper:2005}
{\sc J. Topper}, {\em Financial Engineering with Finite Elements}, Wiley Publ., 2005.

\bibitem{Wilmott:1998}
{\sc P. Willmott}, {\em Derivatives: The Theory and Practice of Financial Engineering}, John Wiley $\& $Sons, West Sussex, England, 1998.

\bibitem{Wilmott:1993}
{\sc P. Wilmott, J.N. Dewynne, and S. Howison}, {\em Option Pricing: Mathematical Models and Computation}, Oxford Financial Press, Oxford, 1993.

\bibitem{Zhang:1997}
{\sc X.L. Zhang}, {\em Numerical analysis of American option pricing in a jump-diffusion model}, Mathematics of Operations Research, 22 (1997), pp.~668--690.

\bibitem{Zhou:2009}
 {\sc J.H. Zhou}, \emph{Multi-Asset Option Pricing with L\'{e}vy Process}, Ph.D. Thesis, National University of Singapore, 2009.

\end{thebibliography}
\end{document}